\newtheoremstyle{mytheorem}{\topsep}{\topsep}{\sffamily}{}{\bfseries}{.}{.5em}{}
\theoremstyle{mytheorem}
\newtheorem{theorem}{Theorem}
\newtheorem{claim}[theorem]{Claim}
\newtheorem{lemma}[theorem]{Lemma}
\renewenvironment{proof}{\par\noindent{\bf Proof.}\hspace{0.5em}}
    {\hfill\qed\vspace{1ex}}
\newenvironment{proofof}[1]{\par\noindent{\bf #1.}\hspace{0.5em}}
    {\hfill\qed\vspace{1ex}}
\DeclareMathOperator{\poly}{poly}
\DeclareMathOperator{\len}{length}
\DeclareMathOperator{\lev}{level}
\DeclareMathOperator{\comp}{comp}
\newcommand{\eps}{\epsilon}
\newcommand{\prob}[1]{\textit{#1}}
\newcommand{\I}{\mathcal{I}}
\newcommand{\DD}{\mathcal{D}}
\newcommand{\K}{\mathcal{K}}
\newcommand{\C}{\mathcal{C}}
\newcommand{\B}{\mathcal{B}}
\newcommand{\G}{\mathcal{G}}
\newcommand{\eL}{\mathcal{L}}
\newcommand{\pr}{\mathbf{Pr}}
\newcommand{\algo}[1]{\textsc{#1}}
\newcommand{\QQ}{{\mbox{\rm\bf Q}}}
\newcommand{\OPT}{\mbox{\rm OPT}}
\makeatletter\newenvironment{algorithm}{%
   \begin{lrbox}{\@tempboxa}\begin{minipage}{\columnwidth-1cm}}{\end{minipage}\end{lrbox}%
   \centerline{\fbox{\usebox{\@tempboxa}}}
}\makeatother
\newcommand{\algtitle}[1]{\vspace{.1cm}\textbf{#1}}
\newcommand{\alginput}[1]{\\\textbf{Input:} \textit{#1}}
\newcommand{\algoutput}[1]{\\\textbf{Output:} \textit{#1}}
{\makeatletter
 \gdef\xxxmark{%
   \expandafter\ifx\csname @mpargs\endcsname\relax 
     \expandafter\ifx\csname @captype\endcsname\relax 
       \marginpar{xxx}
     \else
       xxx 
     \fi
   \else
     xxx 
   \fi}
 \gdef\xxx{\@ifnextchar[\xxx@lab\xxx@nolab}
 \long\gdef\xxx@lab[#1]#2{{\bf [\xxxmark #2 ---{\sc #1}]}}
 \long\gdef\xxx@nolab#1{{\bf [\xxxmark #1]}}
}
\begin{document}

\date{}
\title{Euclidean Prize-collecting Steiner Forest\footnote{A short version of this paper appears in Proceedings of LATIN 2010~\cite{BH10:latin}.}}

\author{MohammadHossein Bateni\thanks{Department of Computer Science, Princeton University, Princeton, NJ 08540; Email: \textsf{mbateni@cs.princeton.edu}.
The author was supported by 
a Gordon Wu fellowship as well as
NSF ITR grants
                      CCF-0205594, CCF-0426582 and NSF CCF 0832797,
                      NSF CAREER award CCF-0237113,
                      MSPA-MCS award 0528414,
                      NSF expeditions award 0832797.}
 \and MohammadTaghi Hajiaghayi\thanks{AT\&T Labs---Research, Florham Park, NJ 07932; Email: \textsf{hajiagha@research.att.com}.}}
\maketitle

\begin{abstract}
In this paper, we consider {\em Steiner forest} and its
generalizations, {\em prize-collecting Steiner forest} and {\em
$k$-Steiner forest}, when the vertices of the input graph are points
in the Euclidean plane and the lengths are Euclidean distances.
First, we present a simpler analysis of the polynomial-time
approximation scheme (PTAS) of Borradaile et
al.~\cite{BKM08:euc-for} for the {\em Euclidean Steiner
forest} problem.
This is done by proving a new structural property and
modifying the dynamic programming by adding a new piece of
information to each dynamic programming state. 
Next we develop a
PTAS for a well-motivated case, i.e., the multiplicative case, of
prize-collecting and budgeted Steiner forest.
The ideas used in the algorithm may have applications in design of a broad class of bicriteria PTASs.
At the end, we
demonstrate why PTASs for these problems can be hard in the general
 Euclidean case (and thus for PTASs we cannot go beyond the multiplicative case).
\end{abstract}
\section{Introduction}

Prize-collecting Steiner problems are well-known
 network design problems with several applications in
expanding telecommunications networks (see e.g.~\cite{JMP00,SCRS}),
cost sharing, and Lagrangian relaxation techniques (see e.g.
\cite{JV01,CRW01}). The most general version of these problems is
called the {\em prize-collecting Steiner forest (PCSF)}
problem\footnote{It is sometimes called
{\em prize-collecting generalized Steiner tree (PCGST)} in the literature.}, in which,
given a graph $G=(V,E)$, a set of (commodity) pairs
$\DD=\{(s_1,t_1),(s_2,t_2), \dots\}$, a non-negative cost function
$c:E\rightarrow \QQ^{\geq 0}$, and finally a non-negative penalty
function $\pi:\DD\rightarrow \QQ^{\geq 0}$, our goal is a
minimum-cost way of buying a set of edges and paying the penalty for
those pairs which are not connected via bought edges. When all
penalties are $\infty$, the problem is the classic APX-hard
\prob{Steiner forest} problem for which the best approximation
factor is $2-\frac{2}{n}$ ($n$ is the number of vertices of the
graph) due to Goemans and Williamson~\cite{GW95}. When all sinks are
identical in the PCSF problem, it is the classic prize-collecting
Steiner tree problem. Bienstock, Goemans, Simchi-Levi, and
Williamson~\cite{BGSW93} first considered this problem (based on a
problem earlier proposed by Balas~\cite{Bal89}) for which they gave
a 3-approximation algorithm.
The current best approximation algorithm for this problem is a
recent 1.992-approximation algorithm of Archer, Bateni, Hajiaghayi,
and Karloff~\cite{ABHK09} improving upon a primal-dual
$\big(2-\frac{1}{n-1}\big)$-approximation algorithm of Goemans
and Williamson~\cite{GW95}. When in addition all penalties are
$\infty$, the problem is the classic \prob{Steiner tree} problem,
which is known to be APX-hard \cite{cr:7} and for which the best
known approximation factor is $1.55$~\cite{cr:31}.

  There are several
$3$-approximation algorithms for the \prob{prize-collecting Steiner forest}
problem using LP rounding, primal-dual, or iterative rounding methods which
are first initiated by Hajiaghayi and Jain~\cite{HJ06}
(see~\cite{BGSW93,Hajiagha:iter}).
Currently the best approximation factor for this problem is a
randomized $2.54$-approximation algorithm \cite{HJ06}. The approach
of Hajiaghayi and Jain has been generalized by Sharma, Swamy, and
Williamson~\cite{SSW07} for network design problems where violating
arbitrary 0-1 connectivity constraints are allowed in exchange for a
very general penalty function.

Lots of attention has been paid to budgeted versions of Steiner problems
as well. In the \prob{$k$-Steiner forest} (or just $k$-forest for
abbreviation), given a graph $G=(V,E)$ and a set of (commodity)
pairs $\DD$, the goal is to find a minimum-cost forest that connects
at least $k$ pairs of $\DD$. The best current approximation factor
for this problem is in
$O(\text{min}\{\sqrt{k},\sqrt{n}\})$~\cite{GHNR07}. On the other
hand, Hajiaghayi and Jain~\cite{HJ06} could transform notorious
\prob{dense $k$-subgraph} to this problem, for which the current best
approximation factor is $O(n^{1/3-\epsilon})$~\cite{FKP01}. The
special case in which we have a root $r$ and $\DD$ consists of all
pairs $(r,v)$ for $v\in V(G)-\{r\}$ is the well-known NP-hard
$k$-MST problem. The first non-trivial approximation algorithm for
the $k$-MST problem was given by Ravi et al.~\cite{ch:18}, who
achieved an approximation ratio of $O(\sqrt k)$. Later this
approximation ratio is improved to a constant by Blum et
al.~\cite{ch:7}. Currently the best approximation factor for this
problem is $2$ due to Garg~\cite{ch:11}.

In this paper, we consider \prob{Euclidean prize-collecting Steiner
forest} and \prob{Euclidean $k$-forest} in which  the vertices of the input graph are
points in the Euclidean plane (or low-dimensional Euclidean
space) and the lengths are Euclidean distances. For the \prob{Euclidean
Steiner tree} problem, Arora~\cite{arora98:ptas} and Mitchell
\cite{cr:26} gave polynomial-time approximation schemes (PTASs).
    Recently Borradaile, Klein and Kenyon-Mathieu~\cite{BKM08:euc-for} claim a PTAS for the
  more general problem of \prob{Euclidean Steiner forest} .

\subsection{Problem definition}

Motivated by the settings in which the demand of each pair is the
product of the weight of the origin vertex and the weight of the
destination vertex in the pair and thus in a sense contributions of
each vertex to all adjacent pairs are the same (e.g., see {\em
product multi-commodity flow} in Leighton and Rao~\cite{LR99} or \cite{Bonsma,KS02}, and
its applications in wireless networks~\cite{MSL08} or
routing~\cite{CKS04,CKS05}), we consider the following multiplicative
version of prize-collecting Steiner forest for the Euclidean case.


In the \prob{Multiplicative prize-collecting Steiner forest
(MPCSF)} problem, given an undirected graph $G(V,E)$ with
non-negative edge lengths $c_e$ for each edge $e\in E$, and also
given weights $\phi(v)$ for each vertex $v\in V$, our goal is to
find a forest $F$ which minimizes the cost
 \[   \sum_{e\in F} c_e  +  \hskip-.5cm\sum_{{u,v\in V}:{\text{ $u$ and $v$ are not connected via $F$\hskip-2.5cm}}} \hskip-.3cm\phi(u)\phi(v).  \]
%
%
Indeed, this is an instance of PCSF in which each ordered vertex pair
$(u,v)$ forms a request with penalty $\phi(u)\phi(v)$.\footnote{We can change the definition to unordered pairs whose treatment requires only a slight modifications of the algorithms.  Currently, each unordered pair $(u,v)$ has a prize of $2\phi(u)\phi(v)$ if $u\neq v$.}
 We may be asked
to \emph{collect a certain prize $S$}, in which case the goal is to
find the forest $F$ of minimum cost for which
\[ \sum_{{u,v\in V}:{\text{ $u$ and $v$ are connected via $F$\hskip-2cm}}} \phi(u)\phi(v) \geq S. \]
Let us call this problem $S$-MPCSF. We show that this is a
generalization of the $k$-MST problem (see Appendix~\ref{sec:kmst}) and thus currently there is no approximation better than 2
for this problem either. When working on the Euclidean case, the
input does not include any Steiner vertices, as all the points of
the plane are potential Steiner points.

A bicriteria $(\alpha,\beta)$-approximate solution for the the
$S$-MPCSF problem is one whose cost is at most $\alpha\OPT$, yet
collects a prize of at least $\beta S$. Our main contribution in
this paper is a bicriteria $(1+\eps,1-\eps')$-approximation
algorithm that runs in time exponential in $1/\eps$ but polynomial
in $n$ and $1/\eps'$. We then use this  algorithm to obtain a PTAS for
MPCSF.

\subsection{Our contribution}\label{sec:contrib}
First of all, we present a simpler analysis for the algorithm of
Borradaile et al.~\cite{BKM08:euc-for} for the
\prob{Euclidean Steiner forest} problem and reprove the following
theorem.
\begin{theorem}\label{thm:sf}
 For any constant $\eps > 0$, there is an algorithm that runs in polynomial time
 and approximates the \prob{Euclidean Steiner forest} problem within $1+\eps$ of the
 optimal solution.
\end{theorem}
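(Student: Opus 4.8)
The plan is to follow Arora's framework for Euclidean optimization problems, using the algorithm of Borradaile et al.~\cite{BKM08:euc-for} but organizing its correctness proof around a single clean structural statement together with a correspondingly augmented dynamic program. First I would preprocess in the standard way: after scaling and a perturbation that snaps every terminal to a point of an integer grid, we may assume all terminals lie in a bounding box $[0,L]^2$ with $L=O(n/\eps)$, distinct terminals are at mutual distance at least $1$, and $\OPT$ is large enough that an additive loss of $O(\eps\,\OPT/\log L)$ per dissection level is affordable; these perturbations change $\OPT$ by at most a $(1+\eps)$ factor. Next, pick a random shift $(a,b)\in\{0,\dots,L\}^2$, build the shifted recursive dissection (quadtree) of the box, and place $m=O(\eps^{-1}\log L)$ equally spaced \emph{portals} on each dissection segment.

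The core is a structure theorem: with probability at least $1/2$ over the random shift there is a feasible Steiner forest of length at most $(1+\eps)\OPT$ that is \emph{portal-respecting} (it meets the interior of each dissection segment only at portals) and \emph{$r$-light} (it crosses each dissection segment at most $r=O(1/\eps)$ times). I would prove this as in Euclidean Steiner tree~\cite{arora98:ptas}: start from an optimal forest, round each crossing to the nearest portal (moving mass by at most one cell width), then, bottom-up over the quadtree, whenever a segment is crossed more than $r$ times apply the geometric Patching Lemma, i.e.\ add two short copies of the offending portion of the segment so the number of crossings drops to $2$, deleting redundant pieces to control cost. The charging is the usual one: the length added per level is $O(1)$ per patched segment, the random shift spreads the patched segments out, and summing over the $O(\log L)$ levels gives total added length $\le\eps\,\OPT$. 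The point I want to emphasize, and where I expect the real work to lie, is that for a \emph{forest} the patching operations only ever \emph{add} edges, so they never disconnect a demand pair; feasibility is therefore free, and the ``new structural property'' I would aim to isolate is the sharper claim that, after this simplification, inside every dissection cell the solution splits into only $O(1/\eps)$ fragments (at most one per used portal on the cell's four sides), so that the interface of a cell is describable by a partition of an $O(1/\eps)$-element set.

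Finally I would set up the dynamic program over the quadtree. For each dissection cell $C$ a \emph{configuration} records (i) which portals on $\partial C$ are used, (ii) a partition of those used portals into the groups that the global solution will eventually connect together, and (iii) the one extra piece of bookkeeping that makes the merge step well defined: for each group a constant-size label identifying which class of terminals (equivalently, which ``side'' of the surrounding structure) that group is obligated to join, which is exactly what lets one decide, at the \emph{smallest} cell enclosing both endpoints of a demand $(s_i,t_i)$, whether $s_i$ and $t_i$ end up in the same component. The table entry $A[C,\text{config}]$ is the minimum length of a portal-respecting $r$-light sub-forest inside $C$ realizing that interface and satisfying every demand whose two endpoints both lie in $C$. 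Leaf cells (at most one terminal) are filled directly; an internal cell is filled by enumerating configurations of its four children, gluing their portal graphs at shared portals, projecting the resulting connectivity onto $\partial C$, checking the demands newly enclosed by $C$, and adding lengths. The number of configurations per cell is $m^{O(1/\eps)}\cdot 2^{O(\eps^{-1}\log(1/\eps))}=n^{O(1/\eps)}$, so the DP runs in time $n^{O(1/\eps)}$, and the answer is $A[\text{root},\varnothing]$.

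Trying all $L^2=\poly(n)$ shifts (or the $O(\log^2 L)$ essentially distinct ones) and returning the best forest derandomizes the algorithm and yields a $(1+\eps)$-approximation in polynomial time, which is Theorem~\ref{thm:sf}. The main obstacle, as flagged above, is the structure theorem, specifically establishing the light, portal-respecting forest while \emph{simultaneously} guaranteeing the bounded-interface property used in (iii); without it the DP state is not polynomially bounded, and it is precisely this coupling that the ``new structural property'' plus the augmented state are meant to resolve cleanly.
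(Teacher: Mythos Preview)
Your overall architecture (Arora dissection, portals, $r$-lightness, DP over the quadtree with a partition $\mathcal{P}$ recording which boundary components must eventually merge) matches the paper. But the load-bearing piece is exactly your item~(iii), and there it is not a proof yet: ``a constant-size label identifying which class of terminals that group is obligated to join'' is a placeholder, not a construction. The difficulty is that when you process a square $R$ and a demand $(s,t)$ first becomes enclosed, you must know \emph{which} boundary component of the child containing $s$ actually carries $s$ (and likewise for $t$); there is no a~priori constant-size ``class'' that encodes this, since a child square may contain many terminals distributed arbitrarily among its $O(1/\eps)$ boundary components.

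The paper's resolution is concrete and is not implied by standard patching. Each dissection square $R$ is subdivided into a $\gamma\times\gamma$ grid of \emph{cells} with $\gamma=\Theta(1/\eps)$, and each component $\kappa$ in a configuration carries, in addition to its portal set, a $\gamma\times\gamma$ \emph{bitmap} $M$ marking the cells whose terminals belong to $\kappa$. This is only well defined because of a second structural theorem proved on top of portal-respecting/$r$-light: the \emph{locality property} (Theorem~\ref{thm:locality}), which says that any two terminals lying in the same cell of $R$ and both reaching $\partial R$ are already connected inside $R$. That property is enforced by a separate modification step---iteratively adding the boundary $\sigma(C,F)$ of the smallest violating cell---together with a charging argument bounding the added length by $O(\len(F)/\gamma)$ in expectation. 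Your sketch conflates this with Arora's Patching Lemma; patching controls crossings of dissection lines but says nothing about terminals in the same cell being in the same component, so the bitmap disjointness (and hence the polynomial bound on configurations) would fail without the extra step. In short: your (iii) needs to become ``a $\gamma\times\gamma$ bitmap per component,'' and you need a proof of the locality property to justify it; the rest of your outline is fine.
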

This is done by modifying the dynamic programming (DP) algorithm so
that instead of storing paths enclosing the \emph{zones} in the
algorithm by Borradaile et al., we use a bitmap to identify a zone.
The modification results in simplification of the structural
property required for the proof of correctness (See
Section~\ref{sec:struct}). We prove this structural property in
Theorem~\ref{thm:locality}. The proof has some ideas similar to
\cite{BKM08:euc-for}, but we present a simpler charging scheme that
has a universal treatment throughout. Next we give an overview of 
the dynamic programming algorithm in Section~\ref{sec:algo}. 
We have recently come to know that similar simplifications have been independently discovered by
the authors of~\cite{BKM08:euc-for}, too.

Next we extend the algorithm for Euclidean $S$-MPCSF and MPCSF
problems in Section~\ref{sec:multi}.
\begin{theorem}\label{thm:smpcsf}
 For any $\eps, \eps' > 0$, there is a bicriteria $(1+\eps, 1-\eps')$-approximation algorithm
 for the \prob{Euclidean $S$-MPCSF} problem,
  that runs in time polynomial
 in $n, 1/\eps'$ and exponential in $1/\eps$.
\end{theorem}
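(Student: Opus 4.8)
\noindent\emph{Proof plan for Theorem~\ref{thm:smpcsf}.}
The first thing to notice is that in \prob{$S$-MPCSF} the prize collected by a forest $F$ depends only on how $F$ partitions $V$ into connected components: if those components have total weights $w_1,\dots,w_k$, with $w_j=\sum_{v\in C_j}\phi(v)$, then the collected prize equals $\sum_j w_j^2$ (I suppress the lower-order self-pair correction). So the problem is really: cluster the points, pay a Steiner tree per cluster, minimize total cost subject to $\sum_j w_j^2\ge S$. My plan is to run, essentially verbatim, the randomly-shifted quadtree dynamic program behind Theorem~\ref{thm:sf} together with its structure theorem (Theorem~\ref{thm:locality}), but to \emph{augment each DP state with a discretized record of the weight of every cluster that is still being built and of the prize already ``locked in'' by clusters that have closed off}. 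Separability of the prize over clusters is exactly what lets this augmentation work without error accumulating across the $\Theta(\log n)$ levels of the quadtree.

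\medskip
\noindent\emph{Step 1: preprocessing and weight discretization.}
First I would run the standard Arora-style perturbation, placing the points on an integer grid of side $\poly(n/\eps)$ so that, with probability at least $\tfrac12$ over the random shift, Theorem~\ref{thm:locality} yields a portal-respecting, ``light'' forest of cost at most $(1+\eps)\OPT$. A key point: the modifications producing this forest (the patching steps) only \emph{add} edges, hence only \emph{merge} clusters, hence only \emph{increase} $\sum_j w_j^2$; so the near-optimal forest is still feasible for the prize constraint. Next I discretize the weights. After handling the degenerate case $\max_v\phi(v)^2\ge S$ (where the empty forest is already feasible) I may assume $\max_v\phi(v)^2<S\le\Phi^2$ with $\Phi:=\sum_v\phi(v)$ (the upper bound because otherwise the instance is infeasible), so $\Phi<n\sqrt S$. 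Set $\eta:=\eps' S/(4n^2\Phi)$ and replace each $\phi(v)$ by the next multiple of $\eta$. Each weight rises by less than $\eta$, so the prize of \emph{any} forest rises by less than $2n\eta\Phi+n^3\eta^2\le\eps'S$; since the rounded prize is never below the original prize, any forest with rounded prize $\ge S$ has original prize $\ge(1-\eps')S$, while the structural near-optimum keeps rounded prize $\ge S$. After rounding, every cluster weight is an integer multiple of $\eta$, and if I \emph{cap} every recorded weight at $\lceil\sqrt S/\eta\rceil\eta$ (any larger cluster already contributes prize $\ge S$ by itself), each cluster-weight field ranges over at most $\sqrt S/\eta+2<4n^3/\eps'+2=\poly(n,1/\eps')$ values, and the global prize value ranges over $\poly(n,1/\eps')$ values too.

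\medskip
\noindent\emph{Step 2: the augmented dynamic program.}
To the usual state of a dissection square $Q$ --- the used portals of $\partial Q$, their crossing multiplicities, and the partition of the strands crossing $\partial Q$ (their number bounded in terms of $\eps$ by the structure theorem) into groups joined by the part of the solution inside $Q$ (for the Steiner-forest DP, together with the zone bitmap) --- I append: for each such group, its capped weight, namely the sum of the rounded $\phi$-values of the points of $V$ inside $Q$ belonging to that group; and one further number, the ``locked-in prize'', the sum over all groups that lie inside $Q$ and touch no portal of $\partial Q$ of the square of that group's capped weight, itself capped at $S$. The transitions are those of the base DP plus two bookkeeping rules: when four children are combined and the wiring inside $Q$ is guessed, the capped weights of groups that get identified are \emph{added} and then re-capped --- this addition is exact, which is precisely why no rounding error accumulates --- and $Q$'s locked-in prize is the sum of the children's locked-in prizes together with the squares of the capped weights of the groups that close off at level $Q$, again capped at $S$; leaf squares, containing $O(1)$ points, are handled by brute force (including the option of leaving a point in its own cluster). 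At the root every group has closed, so the table reports, per achievable cost, whether the locked-in prize reaches $S$, and I return the cheapest entry with locked-in prize $\ge S$, tracing back to recover $F$. A short check gives ``locked-in prize $\ge S$ at the root'' $\iff$ ``true rounded prize of $F$ is $\ge S$'': capped weights are lower bounds on the true weights, so the DP prize never exceeds $\sum_j w_j^2$; but if some true $w_j\ge\sqrt S$ then its capped square is already at least $S$, and otherwise no capping occurred at all. Consequently the returned forest has rounded prize $\ge S$, hence original prize $\ge(1-\eps')S$, and by Step~1 the structural $(1+\eps)$-near-optimum is one of the candidates, so the returned cost is at most $(1+\eps)\OPT$.

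\medskip
\noindent\emph{Step 3: running time, and the main obstacle.}
The number of states per square is (portal/strand patterns) $\times$ (partitions of the strands) $\times$ $(\sqrt S/\eta+2)^{(\text{number of strands})}$ $\times$ (number of prize values), which is $\poly(n,1/\eps')$ with the exponent depending only on $1/\eps$ (and exponentially so, via the count of partitions of the strands); with $O(n\log n)$ squares and a bounded number of child states per transition, the total running time is $\poly(n,1/\eps')$ with exactly the claimed $1/\eps$-dependence. Trying all $\poly(n)$ shifts makes the algorithm deterministic, and rescaling $\eps$ and $\eps'$ by constants absorbs the various additive losses. The step I expect to need the most care --- and the only genuinely new one over Theorem~\ref{thm:sf} --- is the interplay between Steps~1 and 2: it is only because the prize is \emph{separable} over clusters, a plain sum of $w_j^2$, that a single additive rounding of the weights at the outset, together with the caps at $\sqrt S$ and $S$, is enough both to keep the approximation ratio from degrading through the $\Theta(\log n)$ cluster merges and to keep the augmented state space polynomial.
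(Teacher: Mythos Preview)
Your proposal is correct and follows essentially the same approach as the paper: augment each DP state with a discretized weight sum $\Sigma$ per boundary component and a global collected-prize counter $\Pi$, cap them at $\sqrt{S}$ and $S$ respectively, and count states. The paper's version is slightly leaner---it rounds weights \emph{down} (so the DP target becomes $(1-\eps')S$ rather than $S$), uses the coarser unit $\theta=\eps'\sqrt{S}/(2n)$ (giving $O(n/\eps')$ weight values rather than your $O(n^3/\eps')$), and drops the bitmap and the partition $\mathcal{P}$ entirely since there are no prescribed demand pairs, reverting to an Arora-style state $\kappa=(P,\Sigma)$---but these are cosmetic differences, and your observation that the patching steps only add edges (hence only increase $\sum_j w_j^2$) is a point the paper leaves implicit.
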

Notice that $\eps'$ need not be a constant.
In particular, if all weights are polynomially bounded integers,
we can find in polynomial time a $(1+\eps)$-approximate solution
that collects a prize of at least $S$; this can be done by picking
$\eps'$ to be sufficiently small ($\eps'^{-1}$ is still polynomial).
Next we present a PTAS for \prob{Euclidean MPCSF}.
\begin{theorem}\label{thm:mpcsf}
 For any constant $\eps$, there is a $(1+\eps)$-approximation algorithm
 for the \prob{Euclidean MPCSF} problem,
  that runs in  polynomial time.
\end{theorem}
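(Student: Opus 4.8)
The plan is to deduce Theorem~\ref{thm:mpcsf} from the bicriteria algorithm of Theorem~\ref{thm:smpcsf} for \prob{Euclidean $S$-MPCSF}. Put $\Phi:=\sum_{v\in V}\phi(v)$, so the total prize over all ordered pairs is $\Phi^2$ and a forest has MPCSF value $V$ exactly when it collects prize $\Phi^2-V$; write $\OPT$ for the optimum value and fix an optimal forest $F^\star$ with edge cost $c(F^\star)$ and penalty $\mathrm{pen}(F^\star)=\OPT-c(F^\star)$. A naive reduction -- guess $P:=\Phi^2-\OPT$ and run Theorem~\ref{thm:smpcsf} with target $P$ -- fails badly: that theorem may lose an $\eps'$-fraction of $P$, and $P$ can be astronomically larger than $\OPT$ (e.g.\ $n$ nearly collinear unit-spaced points, each of huge weight), so no polynomially small $\eps'$ suffices. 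The remedy is to pre-collect almost all of the prize in a way immune to that loss. Throughout, using any constant-factor approximation for PCSF (which covers the Euclidean multiplicative case) and enumerating $O(1/\eps)$ powers of $1+\eps$, we may assume we know a value $\hat V$ with $\OPT\le\hat V\le(1+\eps)\OPT$.

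The structural fact we need is that $F^\star$ concentrates almost all the weight in a single tree. Let $C_1,\dots,C_m$ be its components (singletons included) and $\phi(C_i):=\sum_{v\in C_i}\phi(v)$; then $\mathrm{pen}(F^\star)=\Phi^2-\sum_i\phi(C_i)^2$, so $\sum_i\phi(C_i)^2\ge\Phi^2-\OPT$, while trivially $\sum_i\phi(C_i)^2\le\phi(C^\star)\sum_i\phi(C_i)=\phi(C^\star)\Phi$ for the heaviest component $C^\star$. Hence $\phi(V\setminus C^\star)=\Phi-\phi(C^\star)\le\OPT/\Phi\le\hat V/\Phi$. Define the heavy set $H:=\{v\in V:\phi(v)>\hat V/\Phi\}$; the last bound forces $H\subseteq C^\star$, so \emph{all heavy vertices are connected to one another in $F^\star$}, and in particular there is a Steiner tree on $H$ of cost at most $c(F^\star)$. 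The complementary light set $L:=V\setminus H$ has $\phi(L)\le n\hat V/\Phi$, so the total penalty of all pairs having at least one light endpoint equals $\Phi^2-\phi(H)^2=\phi(L)\bigl(\Phi+\phi(H)\bigr)\le 2\phi(L)\Phi\le 2n\hat V=O(n\cdot\OPT)$; thus it is affordable to lose a $1/\poly(n)$ fraction of \emph{that} prize.

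We therefore run Theorem~\ref{thm:smpcsf} augmented so that (i) the vertices of $H$ are required to lie in one component and (ii) the $\eps'$ prize-loss is charged only to pairs not entirely inside $H$ -- a routine modification, since the Steiner-forest dynamic program of Theorem~\ref{thm:sf} already handles a forced group of terminals, after which $\phi(H)^2$ is collected exactly and only the light-incident prize is at risk. Take $\eps':=\eps/(2n)$ (still $1/\poly(n)$), and guess $S_{\mathrm{res}}$, the light-incident prize collected by $F^\star$, rounded down to a multiple of $\eps\hat V$: only $O(n/\eps)$ values, since $S_{\mathrm{res}}=O(n\cdot\OPT)$. The optimal forest $F^\star$ connects $H$ and collects $\ge S_{\mathrm{res}}$ light-incident prize, so it witnesses feasibility, and the augmented algorithm returns a forest $F$ with $c(F)\le(1+\eps)c(F^\star)$ whose light-incident prize falls short of $F^\star$'s by at most $\eps'S_{\mathrm{res}}+\eps\hat V=O(\eps\cdot\OPT)$; translating this shortfall into extra penalty (both $F$ and $F^\star$ collect all of $\phi(H)^2$), $F$ has MPCSF value at most $(1+\eps)c(F^\star)+\mathrm{pen}(F^\star)+O(\eps\cdot\OPT)\le(1+O(\eps))\OPT$. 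Taking the best $F$ over the $O(n/\eps^2)$ guesses and rescaling $\eps$ gives the PTAS, each run costing $\poly(n,1/\eps')\cdot\exp(O(1/\eps))=\poly(n)\cdot\exp(O(1/\eps))$, i.e.\ polynomial for fixed $\eps$. The step needing the most care -- and the likely main obstacle -- is justifying the augmented form of Theorem~\ref{thm:smpcsf}: one must re-run the Euclidean structural and perturbation arguments underlying it with a pre-specified connectivity constraint on $H$, ensuring the prize error stays confined to light-incident pairs. (An alternative is to compute a near-optimal Steiner tree on $H$ via Theorem~\ref{thm:sf}, splice it into the instance as zero-cost infrastructure, and invoke Theorem~\ref{thm:smpcsf} as a black box on $L$; but reconciling that infrastructure with the grid-based dynamic program requires its own, equally routine, argument.)
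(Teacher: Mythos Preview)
Your proposal is correct and follows essentially the same route as the paper: identify a set of ``heavy'' vertices that must all lie in a single component of the optimum, modify the $S$-MPCSF dynamic program so that these vertices are tracked separately (the paper does this by adding an integer count $\mu$ of heavy vertices to each component description and, at the bounding-box level, only accepting configurations in which one component carries all of them), and then observe that the remaining ``light-incident'' prize is only $O(n\cdot\OPT)$, so an $\eps'=\Theta(\eps/n)$ loss on it is affordable.

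Two minor differences are worth noting. First, the paper splits into two cases according to whether $\omega/\Delta^2$ exceeds $1/n^2$ (your $\Phi$ is the paper's $\Delta$): in the easy case it invokes Theorem~\ref{thm:smpcsf} directly with $\eps'=\Theta(\eps/n^2)$, and only in the hard case does it introduce the heavy set. Your argument handles both regimes uniformly, which is a small cleanup---when the heavy set is empty your bound $\Phi^2-\phi(H)^2\le 2n\hat V$ still holds, recovering the easy case. Second, your heavy-vertex threshold is $\hat V/\Phi$ while the paper's is $n\omega/\Delta$; both work, yours via the sharper observation $\phi(V\setminus C^\star)\le\OPT/\Phi$ obtained from $\sum_i\phi(C_i)^2\le\phi(C^\star)\Phi$, the paper's via a pigeonhole on component weights. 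The ``routine modification'' you flag as the main obstacle is exactly what the paper spells out with the $\mu$ parameter and the post-hoc addition of the $\B$-incident penalties, so there is no hidden difficulty there.
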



We also study the case of asymmetric prizes for vertices in which
each vertex $v$ has two types of weights (type one and type two) and the
prize for an ordered pair $(u,v)$ is the product of the first type
weight of $u$, i.e., $\phi^s(u)$, and the second type weight of $v$, i.e., $\phi^t(v)$.
This case is especially interesting because it generalizes the multiplicative
prize-collecting problem when we have two disjoint sets $S_1$ and
$S_2$ and we pay the multiplicative penalty only when two vertices,
one in $S_1$ and the other one in $S_2$, are not connected (by
letting for each vertex in $S_1$ the first type weight be its actual
weight and the second type weight be zero and for each vertex in $S_2$
the first type weight be zero and the second type weight be its actual
weight.) 
 After hinting on the arising complications, we show how we
can extend our algorithms for this case as well.

\begin{theorem}\label{thm:asym-smpcsf}
 For any $\eps, \eps' > 0$, there is a bicriteria $(1+\eps, 1-\eps')$-approximation algorithm
 for the \prob{Euclidean Asymmetric $S$-MPCSF} problem,
  that runs in time polynomial
 in $n, 1/\eps'$ and exponential in $1/\eps$. In addition, for any constant $\eps$, there is a $(1+\eps)$-approximation algorithm
 for the \prob{Euclidean Asymmetric MPCSF} problem,
  that runs in  polynomial time.
\end{theorem}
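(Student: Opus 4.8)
The plan is to run the very same algorithm that proves Theorems~\ref{thm:smpcsf} and~\ref{thm:mpcsf}, changing only the bit of information that a dynamic-programming state carries about the weights sitting inside a partially built component. Recall the shape of that algorithm: build a randomly shifted quadtree over the input points; invoke the structural result of Theorem~\ref{thm:locality} to conclude that some near-optimal solution crosses each quadtree line only through a bounded portal set and only a bounded number of times; and fill a table indexed by a cell together with the ``interface'' on its boundary---which portals are used and how they are grouped by the part of the solution already built inside the cell. To cope with multiplicative prizes, each interface group additionally stores a rounded aggregate weight $\widetilde\Phi$ of the input points inside the cell attached to that group, and when a group stops being connected to anything outside the current cell, the component $C$ it spans is ``closed'' and its prize ($\approx\Phi(C)^2$ in the symmetric case) is added to a running total. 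Rounding the point weights to powers of $1+\Theta(\eps')$, and discarding points whose weight is a negligible fraction of the total (which costs at most a $1-\eps'$ factor of the collectable prize), keeps the number of possible values of $\widetilde\Phi$ polynomial in $n$ and $1/\eps'$; the whole table then has $n^{O(1)}(1/\eps')^{O(1)}2^{O(1/\eps)}$ entries.

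For the asymmetric problem every point $v$ now has two weights, and the prize of a component $C$ is the product $\Phi^s(C)\,\Phi^t(C)$ with $\Phi^s(C)=\sum_{v\in C}\phi^s(v)$ and $\Phi^t(C)=\sum_{v\in C}\phi^t(v)$. The only change I would make is that each interface group stores a \emph{pair} of rounded aggregates $(\widetilde\Phi^s,\widetilde\Phi^t)$, updated coordinatewise when subtrees are joined, and that a closing component contributes $\widetilde\Phi^s\widetilde\Phi^t$ to the running total. The transitions of the dynamic program are otherwise identical. Since the per-group state space is now the \emph{square} of the one in the symmetric case, it is still polynomial in $n$ and $1/\eps'$, so the running time remains polynomial in $n,1/\eps'$ and exponential in $1/\eps$. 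For $S$-MPCSF the algorithm outputs, among all states whose running total reaches $(1-\eps')S$, one of minimum edge cost; for MPCSF it outputs the state maximizing (running total of prize) minus (edge cost), and then the penalty-to-budget reduction is carried out exactly as in the proof of Theorem~\ref{thm:mpcsf}, with $\eps'$ chosen sufficiently small (still inverse-polynomial) so that the $\eps'$-fraction of prize we forgo is absorbed into $\eps\cdot\OPT$.

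Correctness should be inherited almost verbatim. Theorem~\ref{thm:locality} is a statement purely about edge lengths and geometry and knows nothing about how prizes are defined, so the near-optimal portal-respecting solution of cost at most $(1+\eps)\OPT$ exists unchanged; its prize is at least $S$ (resp.\ at least that of $\OPT$) because the structural modification preserves the connectivity of every pair that was connected. Replacing the true weights by the rounded ones shrinks each component's prize only by a factor $(1-\Theta(\eps'))^2\ge 1-\eps'$ after rescaling, so this solution is captured---up to the stated losses---by some state of the dynamic program, which the algorithm then matches or beats. The motivating case of two disjoint sets $S_1,S_2$ whose cross pairs carry penalties is the special case $\phi^t\equiv 0$ on $S_1$, $\phi^s\equiv 0$ on $S_2$, and needs nothing further.

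The step I expect to fight with is not geometric but the accounting for the weights: because a component's prize is now a \emph{product} of two sums, the discretization must be redone so that it loses at most an $\eps'$-fraction of \emph{each} of the two sides' contribution (not merely of one scalar quantity), which in particular forces the discard-the-negligible-weights step to be calibrated separately for $\phi^s$ and $\phi^t$; and the penalty-to-budget reduction of Theorem~\ref{thm:mpcsf} has to be re-examined under these two-sided losses, together with the boundary cases (an instance with almost all weight on one vertex, or a tiny target $S$). I expect all of these to go through with the same estimates as in the symmetric proofs---the genuinely new content is just ``one rounded aggregate per group'' becoming ``two'', and the observation that squaring a polynomial is still a polynomial.
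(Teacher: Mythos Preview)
Your high-level plan---store two aggregates $(\widetilde\Phi^s,\widetilde\Phi^t)$ per interface group and otherwise run the same DP---is the right one and matches the paper.  The gap is precisely the step you flag last and dismiss as routine: the discretization of $\Sigma^s$ and $\Sigma^t$.

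In the symmetric case one can round additively with unit $\theta=\eps'\sqrt{S}/(2n)$ and truncate each $\Sigma$ at $\sqrt{S}$, because a component with $\Sigma>\sqrt{S}$ already has prize exceeding $S$.  In the asymmetric case there is no such truncation for $\Sigma^s$ (or $\Sigma^t$) alone: a component can have tiny $\Sigma^s$ and huge $\Sigma^t$ and still contribute a large product.  Your proposed fix---discard points whose weight is a negligible fraction of ``the total'', calibrated separately for $\phi^s$ and $\phi^t$---does not give a polynomial state space.  To keep the prize loss below $\eps'S$ you would need a discard threshold $\theta^s\le \eps'S/(n\Delta^t)$; the resulting range for $\Sigma^s$ is $[\eps'S/(n\Delta^t),\ \Delta^s]$, and the number of geometrically-spaced values in it is $\Theta\!\big((n/\eps')\log(\Delta^s\Delta^t/S)\big)$, which is not polynomial when $S\ll \Delta^s\Delta^t$.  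So ``the per-group state space is the square of the symmetric one'' is exactly what fails.

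The paper's remedy is a genuinely new ingredient you have not supplied: a floating-point style encoding.  Each $\Sigma^s$ is stored as a pair $(v,x)$ where $v$ is the vertex of largest $\phi^s$ currently in the component and $x$ is an integer in $[0,n^3/\eps']$, representing $x\cdot \eps_1\phi^s(v)/n^2$.  Because $\phi^s(v)\le \Sigma^s\le n\,\phi^s(v)$, the unit automatically scales with the component, the per-step rounding error is at most a multiplicative $\eps_1/n$, and the number of possible pairs is $O(n^4/\eps')$ independently of the spread of the input weights.  The paper highlights this variable-unit trick as the main technical content of the asymmetric extension; it is not the bookkeeping afterthought your proposal treats it as.
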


Indeed, the algorithms in Theorem~\ref{thm:asym-smpcsf} can be
extended to the case in which there are a constant number of
different types of weights for each vertex generalizing the case in
which we have a constant number of disjoint sets and we pay the
multiplicative penalty when two vertices from two different sets are
not connected.
Notice that the case of two disjoint sets already generalizes
the \prob{prize-collecting Steiner tree} problem (by considering $S_1=\{r\}$ and $S_2=V-\{r\}$)
whose best approximation
guarantee is currently $1.992$.  

At the end, we present in Section~\ref{sec:challenge} why PCSF and
$k$-forest problems can be APX-hard in the general case (and thus
for PTASs we cannot go beyond the multiplicative case). We conclude
with some open problems in Section~\ref{sec:conclusion}. All the
omitted proofs appear in the appendix.

\subsection{Our techniques for the prize-collecting version}
Here, we summarize our techniques for the multiplicative prize collecting Steiner forest algorithms; see Section~\ref{sec:multi}.
In all those algorithms, we store in each DP state extra parameters,
including 
the sum of the weights, as well as the multiplicative prize already collected in each component.
These parameters enable us to carry out the DP update procedure.
Interestingly, the sum and collected prize parameters have their own precision units.

In the asymmetric version, a major issue is that no fixed unit is good for all sum parameters.  
Some may be small, yet have significant effect when multiplied by others.
To remedy this, we use variable units, reminiscent of the floating-point  storage formats (mantissa and exponent).
To the best of our knowledge, Bateni and Hajiaghayi~\cite{BH09:facility} were the first to take advantage of this idea in the context of (polynomial time) approximation schemes.
The basic idea is that a certain parameter in the description of DP states has a large (not polynomial) range,
however, as the value grows, we can afford to sacrifice more on the precision.
Thus, we store two (polynomial) integer numbers, say $(i, x)$, where $i$ denotes a variable unit, and $x$ is the coefficient: the actual number is then recovered by $x\cdot u_i$.
The conversion between these representations is not lossless, but the aggregate error can be bounded satisfactorily.

In Section~\ref{sec:prize-mpcsf} we consider the problem where the objective is a linear function of penalties paid and the cost of the forest built.
The challenging case is when the cost of the optimal forest is very small compared to the penalties paid.  
In this case, we identify a set of vertices with large penalties and argue they have to be connected in the optimal solution.  
Then, with a novel trick we show how to ignore them in the beginning, and take them into account only after the DP is carried out.

\section{Preliminaries}\label{sec:prelim}
Let $n=|V|$ be the total number of terminals and let 
$\OPT$ be the total length of the optimal solution.
%
%
A \emph{bitmap} is a matrix with 0-1 entries.
Two bitmaps of the same dimensions are called \emph{disjoint} if and only if they do not have value one at the same entry.
%
Consider two partitions $\mathcal{P}=\{P_1, P_2, \dots,P_{|\mathcal{P}|}\}$  and $\mathcal{P'}=\{P'_1, P'_2, \dots,P'_{|\mathcal{P'}|}\}$ over the same ground set.
Then, $\mathcal{P}$ is said to be a \emph{refinement} of $\mathcal{P}'$
if and only if
any set of $\mathcal{P}$ is a subset of a set in $\mathcal{P}'$, namely $\forall P\in\mathcal{P}, \exists P'\in\mathcal{P}': P\subseteq P'$.
%


\begin{figure}[t]
%
\centerline
{
\subfigure[\label{fig:dissection:1}]{
\begin{tikzpicture}
\draw[step=.5cm,gray,ultra thin] (0,0) grid (4,4);
\draw[step=1cm,gray,thick] (0,0) grid (4,4);
\draw[step=2cm,gray!30!black,very thick] (0,0) grid (4,4);
\draw[step=4cm,black,ultra thick] (0,0) grid (4,4);
\foreach \x in {0,1,2,3,4,5,6,7,8}
{
 \draw[fill=blue] (\x*.25,2) circle (.5mm);
 \draw[fill=blue] (\x*.25,0) circle (.5mm);
 \draw[fill=blue] (0,\x*.25) circle (.5mm);
 \draw[fill=blue] (2,\x*.25) circle (.5mm);
}
\end{tikzpicture}
}\hskip 1cm
\subfigure[\label{fig:dissection:2}]{
\begin{tikzpicture}
\draw[step=.25cm,red,ultra thin] (2,2) grid (3,3);
\draw[step=.5cm,gray,ultra thin] (0,0) grid (4,4);
\draw[step=1cm,gray,thick] (0,0) grid (4,4);
\draw[step=2cm,gray!30!black,very thick] (0,0) grid (4,4);
\draw[step=4cm,black,ultra thick] (0,0) grid (4,4);
\end{tikzpicture}
}}
\caption{\subref{fig:dissection:1} An example of a dissection square with depth $3$, and depiction of portals for a sample dissection square with $m=8$;
\subref{fig:dissection:2} the $\gamma\times\gamma$ grid of cells inside a sample dissection square with $\gamma=4$.\label{fig:dissection}}
\end{figure}
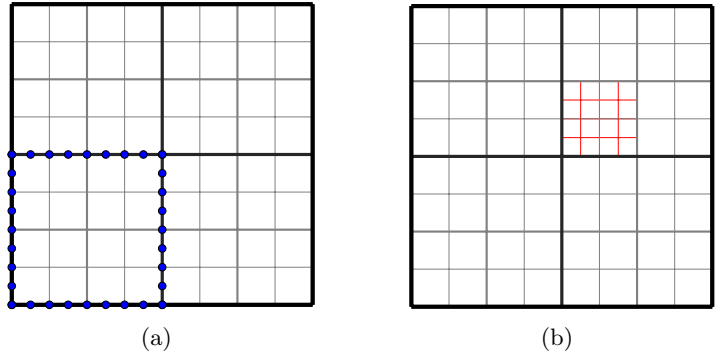
%
By standard perturbation and scaling techniques, we can assume the following conditions hold 
incurring a
cost increase of $O(\eps\OPT)$; see \cite{arora98:ptas,BKM08:euc-for} for example.
\begin{itemize}
\item[(I)] The diameter of the set $V$ is at most $d'=n^2\eps^{-1}\OPT$.  
\item[(II)] All the vertices of $V$ and the Steiner points have coordinates $(2i+1,2j+1)$ where $i$ and $j$ are integers.
\end{itemize}
\vspace{-1.5mm}

For simplicity of exposition, we ignore the above increase in cost.
As we are going to  obtain a PTAS, this increase will be absorbed in the future cost increases.
We have a grid consisting of vertical and horizontal lines with equations $x=2i$ and $y=2j$ where $i$ and $j$ are integers.
Let $\eL$ denote the set of lines in the grid. 
We let $L$ be the smallest power of two greater than or equal to $2d'$
and perform a dissection on the randomly shifted bounding box of size $L\times L$; see Figure~\ref{fig:dissection:1}.

For each dissection square $R$ and each side $S$ of $R$,
designate $m+1$ equally spaced points along $S$ (including the corners) as \emph{portals} of $R$ where $m$ is the smallest power of $2$ greater than $4\epsilon^{-1}\log L$.
So the square $R$ has $4m$ portals.

There is a notion of \emph{level} associated with each dissection square, line, or side of a square.
The bounding box has level zero, and level of each other dissection square is one more than the level of its parent dissection square.
The level of a line $\ell$ is the minimum level of a square $R$ a side of which falls on the line $\ell$.
Thus, the first two lines dividing the bounding box have level one.
If a side $S$ of a square $R$ falls on a line $\ell$, we define $\lev(S)=\lev(\ell)$.
So $\lev(S) \leq \lev(R)$.  
The thickness of the lines in Figure~\ref{fig:dissection} denotes their level:
the thicker the line, the lower is its level.

For a (possibly infinite) set of geometric points $X$, let $\comp(X)$ denote the number of connected components of $X$;
we will use the shorthand ``component'' in this paper.
With slight abuse of notation, $\ell\in\eL$ is used to refer to the set of points\footnote{not necessarily terminals} on $\ell$.
In addition, we use $\eL$ to denote the union of points on the lines in $\eL$.
Similarly, we use $R$ to denote the set of all points on or inside the square $R$.
The set of points on (the boundary of) the square $R$ is referred to by $\partial R$.
The total length of all line segments in $F$ is denoted by $\len(F)$.


The following theorem is mentioned in \cite{BKM08:euc-for} in a stronger form.
We only need its first half  whose proof follows from \cite{arora98:ptas}.
\begin{theorem}{\cite{BKM08:euc-for}}
\label{thm:twoprops}
There is a solution $F$ having expected length at most $(1+\frac{1}{4}\epsilon)\OPT$ such that each dissection square $R$ satisfies the following two properties:
for each side $S$ of $R$, $F\cap S$ has at most $\rho=O(\eps^{-1})$ non-corner components\footnote{Non-corner components are those not including any corners of squares.  Note that each square can have at most four corner components.} (\emph{boundary components property}); and
each component of $F\cap\partial R$ contains a portal of $R$ (\emph{portal property}).
\end{theorem}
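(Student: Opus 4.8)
The plan is to reduce the statement to Arora's structure theorem for Euclidean Steiner tree \cite{arora98:ptas}. I would start from an optimal forest $F^\star = \OPT$, perform the randomly shifted dissection of Section~\ref{sec:prelim}, and apply two length-increasing modifications to $F^\star$: one to enforce the boundary components property and one to enforce the portal property, their combined expected cost being $O(\eps\OPT)$. The single way the forest setting differs from the tree setting of \cite{arora98:ptas} is that one must check these modifications keep the solution feasible; but both only ever \emph{add} segments that lie on dissection lines, plus arbitrarily short re-attachment pieces, so they can merge connected components of $F^\star$ yet can never disconnect a terminal pair that $F^\star$ joins, and feasibility (indeed connectivity) is preserved for free.

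For the boundary components property: whenever a side $S$ of a dissection square carries more than $\rho = O(\eps^{-1})$ non-corner components of $F^\star\cap S$, apply the patching lemma of \cite{arora98:ptas} to the interior of $S$, routing one path of length $\le \len(S)$ along $S$ through a short re-attachment to each such component and deleting the pieces that forced all but $\rho$ of them to meet $S$; afterwards at most $\rho$ non-corner components of $F^\star\cap S$ remain, and every piece involved is still connected through the new path. (Because $F^\star$ may run \emph{along} $S$ rather than merely cross it, the phrase ``components of $F^\star\cap S$'' takes the place of Arora's ``crossings,'' but the patching lemma carries over verbatim.) For the cost one charges each patch to the components it eliminates and sums over the dissection levels against the probabilities $\Pr[\lev(\ell)=i]=\Theta(2^i/L)$; this is precisely Arora's calculation, and with threshold $\rho = O(\eps^{-1})$ it yields expected added length $O(\eps\OPT)$, using that a grid of constant spacing is crossed $O(\OPT)$ times in total by $F^\star$ (condition~(II) forces every segment of $F^\star$ to have length at least $2$).

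Once every side carries at most $\rho\le m$ non-corner components, I would obtain the portal property by sliding each such component along its side to the nearest portal and joining it with a segment of length at most $\len(S)/m$; since $\rho\le m$, distinct components go to distinct portals. A component of $F^\star\cap\partial R$ then either lies in the interior of a single side, in which case it has just been slid onto a portal, or it meets two sides of $R$ and hence contains a corner of $R$, which is a portal by definition. Here the choice $m=\Theta(\eps^{-1}\log L)$ is exactly what is needed: one slide on a level-$i$ side costs $O(L/(2^i m))$ with probability $O(2^i/L)$, hence expected cost $O(\log L/m)=O(\eps)$, and summing over the $O(\OPT)$ components slid gives $O(\eps\OPT)$ again. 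Fixing the hidden constants so the two modifications add at most $\tfrac14\eps\OPT$ in expectation gives the claimed bound $(1+\tfrac14\eps)\OPT$, with both properties then holding at every dissection square at once.

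I expect no genuinely new obstacle — which is exactly why this can be cited rather than reproved — so the step needing real care is the bookkeeping that makes the \emph{aggregate} length added by all patching and portal-sliding operations $O(\eps\OPT)$ simultaneously: a patch or a slide placed on a line $\ell$ is itself a short segment that contributes new crossings to the perpendicular dissection lines it meets, so the operations must be ordered and each charged to a fixed piece of $F^\star$ so that the resulting per-level geometric series in $i\le\log L$ converges, exactly as in \cite{arora98:ptas}. Apart from that, the only adaptations are the two forest-specific remarks above, and both only help.
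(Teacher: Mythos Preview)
Your proposal is correct and matches the paper's approach: the paper does not reprove Theorem~\ref{thm:twoprops} at all, it simply states that the result ``is mentioned in \cite{BKM08:euc-for} in a stronger form'' and that the part needed here has a ``proof [that] follows from \cite{arora98:ptas}.'' Your sketch---Arora's patching lemma for the boundary components property, portal-snapping for the portal property, and the observation that in the forest setting both modifications only add segments and hence preserve feasibility---is exactly the reduction to \cite{arora98:ptas} that the paper has in mind but leaves implicit.
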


\section{Structural theorem}\label{sec:struct}
Let $R$ be a dissection square.
Divide $R$ into a regular $\gamma\times\gamma$ grid of \emph{cells}, where $\gamma$ is a constant power of two determined later; see Figure~\ref{fig:dissection:2}.
We say $R$ is the \emph{owner} of these cells.
The level of these cells, as well as the new lines they introduce, is defined in accordance with the dissection.
That is, we assign them levels as if they are normal dissection squares and we have continued the dissection procedure for $\log\gamma$ more levels.
There are several lemmas in the work of \cite{BKM08:euc-for} to prove the structural property they require (this is the main contribution of that work).
We modify the dynamic programming definition such that its proof of correctness needs a simpler structural property.
The proof of this property is simpler than that in the aforementioned paper.
\begin{theorem}\label{thm:locality}
There is a solution $F$ having expected length at most $(1+\frac{1}{2}\epsilon)\OPT$ such that each dissection square $R$ satisfies the \emph{locality property}:
if the terminals $t_1$ and $t_2$ are inside a cell $C$ of $R$ and are connected to $\partial R$ via $F$,
 then they are connected in $F\cap R$.
\end{theorem}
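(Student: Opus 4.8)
\hspace{0.5em}
The plan is to start from the solution $F$ of Theorem~\ref{thm:twoprops} --- which already has expected length at most $(1+\tfrac14\eps)\OPT$ and satisfies, for \emph{every} dissection square, both the boundary components property and the portal property --- and to add a small family of short ``connectors'' that enforce the locality property while the extra expected length stays below $\tfrac14\eps\OPT$. Every connector only \emph{merges} components of $F$, so no connectivity already present is lost; the one delicate point is that a connector placed inside a square also lies inside that square's ancestors (and may reach into its smaller squares), which I would handle by defining all connectors with respect to the original $F$ rather than an intermediate forest, and then verifying locality for the union directly.

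The connector at a cell comes from the following observation. Suppose a cell $C$ of a square $R$ violates locality, witnessed by distinct components $B_1\neq B_2$ of $F\cap R$, each touching $\partial R$ and each containing a terminal that lies in $C$. Because such a terminal lies in $C$ while $B_i$ reaches $\partial R$, the component $B_i$ meets $\partial C$; moreover $B_1$ and $B_2$ meet \emph{distinct} components of $F\cap\partial C$, since each component of $F\cap\partial C$ is contained in a single component of $F\cap R$. Applying the boundary components property to $C$ --- legitimate because $C$ is a square of the dissection extended by $\log\gamma$ further levels --- shows $F\cap\partial C$ has only $O(\rho)=O(\eps^{-1})$ components, hence at most $O(\rho)$ components of $F\cap R$ can at once meet $\partial C$ and contain a terminal of $C$. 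So I can repair $C$ by joining all such components with $O(\rho)$ straight segments, each running between two terminals of $C$ and therefore of length at most $\operatorname{diam}(C)$ and contained in $C$; this costs $O(\rho\operatorname{diam}(C))=O\!\big(\rho\cdot\mathrm{side}(R)/\gamma\big)$. After doing this for every cell of every square, any two terminals of a cell $C$ that reach $\partial R$ lie in one component of the augmented $F\cap R$, i.e.\ the locality property holds everywhere.

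The substantive part, and the step I expect to be the main obstacle, is bounding the expected total connector length by $\tfrac14\eps\OPT$. A square $R$ has only $O(\rho)$ boundary-touching components of $F\cap R$, and a component of length $\ell$ passes through only $O\!\big(\gamma\ell/\mathrm{side}(R)+1\big)$ of the $\gamma^2$ cells of $R$, so up to lower-order terms the connector cost charged to $R$ is $O(\rho\len(F\cap R))$; summed naively over all squares meeting $F$ this has a spurious $\Theta(\log L)$ factor --- the usual obstruction in Arora-style analyses. I would remove it by charging connector cost not per square but to crossings of $F$ with the dissection lines, and invoking the random-shift property in the manner of Arora~\cite{arora98:ptas} (a curve of length $\ell$ crosses the level-$i$ lines $O(\ell 2^i/L)$ times in expectation), organizing the charge hierarchically --- as in Arora's patching argument --- so that the contributions over levels telescope instead of each being paid in full. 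Combined with choosing $\gamma$ a sufficiently large power of two and with the portal count $m=\Theta(\eps^{-1}\log L)$ already bounding the number of boundary components on each cell side, this should bring the total expected increase to $\tfrac14\eps\OPT$. Making this accounting come out cleanly with one uniform charging scheme that simultaneously handles interior cells, cells abutting $\partial R$, and the cross-square interaction of connectors is, I expect, the technical heart of the proof.
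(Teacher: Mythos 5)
There is a genuine gap, and it sits exactly where you predicted: the cost bound. Your plan defines all connectors with respect to the \emph{original} $F$ and repairs every violated cell independently, and this is fatal. Consider two long parallel components of $F$, each touching $\partial R$, running at vertical distance about half a cell height apart, with terminals placed in every cell they traverse. With constant probability over the shift no cell boundary separates them, so \emph{every} cell along their length violates locality, and each of your repair segments must have length comparable to the cell height; the one-shot repairs then cost $\Omega(\len(F))=\Omega(\OPT)$, and no choice of the constant $\gamma$ helps because the example scales with $\gamma$. Your own estimate already shows the problem: $O(\rho\len(F\cap R))$ per square is $\Theta(\eps^{-1}\OPT)$ even at a \emph{single} level, so the issue is not just the $\log L$ factor, and a random-shift/telescoping argument cannot rescue it --- your repairs are charged to cells along the \emph{body} of $F$ inside a square, not to crossings of $F$ with lines whose level is random. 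The missing idea is global amortization of the \emph{number} of repairs: the paper processes violated cells iteratively, smallest first, adding $\sigma(C,F)$ (sides of the cell) to the \emph{current} forest, so that one addition merges the offending components of $F\cap R$ and thereby kills all future violations they would have caused; the total number of additions over all squares and levels is then bounded by $O(\comp(F^\ast-\G))=O(\len(F))$ via the ``free interface'' charging to components of $(F\cap\partial R)-\G$ (Claim~\ref{clm:cij}, using Lemma~\ref{lem:tot-int}), and only after that does the random shift enter, making the expected cost per charge $O(1/\gamma)$ because each charge lives on a line of level at most that of the owning square.

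A second, related gap is correctness of the union: because your connectors are computed from the original $F$, a connector added for one cell can newly connect a terminal to $\partial R$ of an ancestor square, creating a locality violation there that your one-shot pass never repairs; ``verifying locality for the union directly'' does not go through. The paper's iterative scheme avoids this by recomputing violations against the current forest and proving termination (the private-side observation). Two smaller points: the boundary components property of Theorem~\ref{thm:twoprops} is stated for dissection squares, not for cells, so your $O(\rho)$ bound on components meeting $\partial C$ is not actually available as stated (the paper never needs it --- its per-addition cost is $4\,\mathrm{side}(R)/\gamma$ with no $\rho$ factor); on the other hand, your observation that interior straight segments preserve the portal and boundary-components properties is fine, as is the reduction of a violation to components meeting $\partial C$.
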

The proof has ideas similar to \cite[Theorem 3.2, and Lemmas 3.3, 3.4, 3.5 and 3.9]{BKM08:euc-for}.  
We first mention and prove a lemma we need in order to prove Theorem~\ref{thm:locality}.
The lemma more or less appears in \cite{arora98:ptas,BKM08:euc-for}.
\begin{lemma}\label{lem:tot-int}
For the forest $F$ output by Theorem~\ref{thm:twoprops},
$\comp(F\cap\eL)\leq \len(F)$.
\end{lemma}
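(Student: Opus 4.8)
The plan is to charge the components of $F\cap\eL$ to the Euclidean length of $F$, edge by edge. First I would use condition~(II): every edge of the Steiner forest is a straight segment between two points with odd integer coordinates, whereas every line of $\eL$ consists of points with an even coordinate. Hence no edge can lie along a line of $\eL$ and no edge can have an endpoint on one; each edge meets each line of $\eL$ transversally, in at most one point. Therefore $F\cap\eL$ is a finite set of isolated points, and $\comp(F\cap\eL)$ is simply the number of distinct points at which segments of $F$ cross the grid lines, which is at most the total number of (edge, grid-line) crossing incidences.

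Next I would bound the number of grid lines crossed by a single edge $e$ with horizontal extent $\Delta x$ and vertical extent $\Delta y$, so that $\len(e)=\sqrt{\Delta x^2+\Delta y^2}$. Since the lines of $\eL$ are spaced $2$ apart and the endpoints of $e$ have odd coordinates (hence lie strictly between consecutive even integers), $e$ crosses exactly $\Delta x/2$ vertical lines and $\Delta y/2$ horizontal lines, for $(\Delta x+\Delta y)/2$ crossings in total. By the elementary inequality $(\Delta x+\Delta y)^2\le 2(\Delta x^2+\Delta y^2)$, this is at most $\len(e)/\sqrt2$.

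Summing over all edges of $F$, and using that distinct crossing incidences can only coincide as points (never split), I get
\[ \comp(F\cap\eL)\ \le\ \sum_{e\in F}\frac{\Delta x_e+\Delta y_e}{2}\ \le\ \sum_{e\in F}\frac{\len(e)}{\sqrt2}\ =\ \frac{\len(F)}{\sqrt2}\ <\ \len(F), \]
which is the claimed bound, in fact with a factor $1/\sqrt2$ to spare.

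The one place the argument could slip is the assumption that no edge of $F$ degenerately runs along, or terminates on, a line of $\eL$; this is exactly what lets me identify $\comp(F\cap\eL)$ with a crossing count. For the forest guaranteed by Theorem~\ref{thm:twoprops} this is forced by condition~(II) for the portions joining terminals and Steiner points; any portion that did run along a grid line would form a single component while contributing its (positive) length to $\len(F)$, so it would be absorbed by the slack factor $1/\sqrt2<1$ above. I expect this bookkeeping to be the only real obstacle — the rest is the two-line counting estimate.
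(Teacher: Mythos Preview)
Your core counting argument---per-edge crossings with the grid, the inequality $\Delta x+\Delta y\le\sqrt2\,\len(e)$, and the resulting bound $\comp(F\cap\eL)\le\len(F)/\sqrt2$---is exactly what the paper does. The paper phrases it in terms of the $\ell_1$-length $F^x+F^y$ being at most $\sqrt2\,\len(F)$ and each unit of $\ell_1$-length accounting for half a crossing, but this is the same computation.

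The one place you flagged as a potential slip is in fact a real gap, and your proposed fix does not close it. The forest $F$ output by Theorem~\ref{thm:twoprops} has been modified (portal-rerouting, patching to enforce the boundary-components property), and those modifications can introduce segments that lie along lines of $\eL$ or terminate on them; condition~(II) constrains only the original terminals and Steiner points, not the post-processed forest. Your absorption argument---``each such segment forms one component and contributes positive length, so the $1/\sqrt2$ slack covers it''---would require each such segment to have length at least~$1/(1-1/\sqrt2)\approx 3.4$, which nothing guarantees. The paper's resolution is different and cleaner: it observes (without reproducing the proof of Theorem~\ref{thm:twoprops}) that those modifications never \emph{increase} $\comp(F\cap\eL)$, since the added segments attach to existing intersection points rather than create new isolated components. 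This lets one prove the bound for the unmodified forest, where your transversality argument from condition~(II) is airtight, and then carry it over. You should make that reduction explicit rather than rely on the slack.
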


We can now prove the main structural result.
A side $S$ of a square $R$ is called \emph{private} if it does not lie on a side of the parent square $R'$ of $R$.
Observe that out of any two opposite sides of a dissection square, exactly one is private.

\begin{proofof}{\proofname\ of Theorem~\ref{thm:locality}}
We start with a solution $F$ satisfying Theorem~\ref{thm:twoprops}.
The final solution is produced by iteratively finding the smallest cell $C$ owned  by a square $R$ that violates the locality property,
and adding $\sigma(C,F)$ to $F$, where $\sigma(C,F)$ 
is defined as the union of 
the private sides of $C$
and
any side of $C$ having non-empty intersection with $F$.
We claim the locality property is realized after finitely many such additions.
If after adding $\sigma(C,F)$ to $F$, the cell $C$ still violates the locality property, there has to  be exactly two opposite sides of the cell having non-empty intersection with $F$; otherwise, the $\sigma(C,F)$ is clearly connected.
However, in case of the opposite sides, one middle side will be a private side of $C$ and hence included as well.

Next, we argue that the conditions of Theorem~\ref{thm:twoprops} still hold.
Take a side $S$ of any square $R$.
If the conditions are to be affected for $S$,
it has to be due to an addition involving a cell $C$ that has a side $S'$ such that
(1) $S'$ has non-empty intersection with $S$, and (2) $S'$ is added to $F$ as part of $\sigma(C,F)$.
The condition will be trivial if $S'$ contains $S$.
Thus, we assume that $C$ is a smaller square than $R$.
So $S'$ cannot be a private side of $C$.
However, the number of components on $S$ cannot increase if $S'$ has already an intersection with $F$.

Finally we  show that the additional length is not large.
Let $F^\ast=F\cap \eL$,
and let $\G=\{(x,y): x=2i, y=2j\}$ be the set of all grid points.
We will charge the additions to the connected components of $F^\ast-\G$.
Notice that
\begin{align}
               \comp(F^\ast-\G)
                                 &\leq \comp(F^\ast) + 3|F^\ast\cap\G| \label{eqn:2}\\
                                 &\leq \comp(F^\ast) + 3\cdot(\len(F^\ast) + \comp(F^\ast)) \label{eqn:3}\\
                                 &= 4\comp(F^\ast) + 3\len(F) \nonumber\\
                                 &\leq 7\len(F), &\text{by Lemma~\ref{lem:tot-int}}. \label{eqn:5}
\end{align}
Inequality~\eqref{eqn:2} holds because removal of each grid point on $F^\ast$ increases the number of components by at most three.
To obtain \eqref{eqn:3}, notice that in any connected component of $F^\ast$,
the distance between any two points of $F^\ast\cap\G$ is at least $2$.
Hence, if there are more than one such points, there cannot be more than $\len(F^\ast)$ ones.

We charge this addition to a connected component of $(\partial R\cap  F)-\G$,
 in such a way that each connected component is charged to at most twice: once from each side.
For simplicity, we duplicate each connected component of $(F\cap\ell)-\G$: they correspond to squares from either side of $\ell$.
For any dissection square $R$, let  $\mathcal{C}_{R}$ refer to the connected components of $F\cap R$ that reach $\partial R$.
Further, let $\mathcal{K}_R$ be the set of connected components of $(F\cap\partial R)-\G$.
When $\sigma(C,F)$ is  added where $R$ is the owner of $C$,
there are $k\geq 2$ components $c_1, \dots, c_k\in\C_R$ that become connected.
Any element of $\K_R$ connected via $F\cap R$ to a component $c\in\C_R$
is said to be an \emph{interface} of $c$.
The addition will be charged to a \emph{free} interface of some $c\in\C_R$
with maximum level. This element will no longer be free
for the rest of the procedure.
We argue this procedure  successfully charges all the additions to appropriate border components.
To this end, we shortly prove the following stronger claim via induction on the number of additions performed.
We call a dissection square $R$ \emph{violated} if the locality property does not hold for a cell $C$ owned by $R$.
\begin{claim}\label{clm:cij}
At all times during the execution of this procedure,
any component $c\in\C_R$ has a free interface,
for each violated square $R$.
As a result, any addition can be charged to a free component.
\end{claim}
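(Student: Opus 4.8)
The plan is to prove the claim by induction on the number of additions $\sigma(C,F)$ performed so far, in the form that for \emph{every} dissection square $R$ (not only the violated ones) and every $c\in\C_R$, the component $c$ still has a free interface; this implies the stated claim and also automatically covers squares that only become violated in the course of the procedure. For the base case, $F$ is the forest of Theorem~\ref{thm:twoprops}: every $c\in\C_R$ reaches $\partial R$, so it meets some component $\kappa$ of $F\cap\partial R$, and $\kappa$ contains a portal by the portal property; unless $\kappa$ is a single grid point --- a degeneracy disposed of by the general-position assumptions~(I)--(II) and the corner-component handling of Theorem~\ref{thm:twoprops} --- the set $\kappa-\G$ has a component connected to $c$ inside $R$, which is an interface of $c$ and is free since nothing has been charged yet.

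For the inductive step, suppose the invariant holds and the next addition is made at the smallest violated cell $C$, owned by a square $R_0$; write $c_1,\dots,c_k$ (with $k\ge2$) for the components of $\C_{R_0}$ that $\sigma(C,F)$ merges. Each $c_i$ has a free interface by the invariant, so at least $k\ge2$ free interfaces are available among the $c_i$; the addition is charged to one of \emph{maximum level}, say $\kappa_0$ lying on a line $\ell_0$, which is then marked used. In particular $\lev(\ell_0)\le\lev(R_0)$ and every free interface of every $c_i$ has level at most $\lev(\ell_0)$. The merged component $c_1\cup\dots\cup c_k$ still retains at least $k-1\ge1$ free interfaces, so the invariant survives at $R_0$. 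For any square $R$ with $\kappa_0\notin\partial R$, the only effect of the addition on $\C_R$ and $\K_R$ is that some interfaces enlarge and some components of $\C_R$ may coalesce --- neither of which removes the last free interface of a component --- so the invariant survives there too. The point to be proved is thus that, for every square $R$ with $\kappa_0\in\partial R$, the component $c\in\C_R$ linked to the $c_i$ through $\kappa_0$ (so that $\lev(\ell_0)\le\lev(R)$) still has a free interface after $\kappa_0$ is consumed.

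This is the crux, and the step I expect to be the main obstacle. I would handle it by strengthening the induction hypothesis so that a component of a square always keeps free an interface of the \emph{highest level it reaches that a strictly deeper square cannot preempt} --- the precise formulation being the technical heart of the argument --- and then showing this survives each merge. Granting such a statement, $\kappa_0$ cannot have been $c$'s last free interface: either $c$ reaches $\partial R$ at a level above $\lev(\ell_0)$, where by the strengthened hypothesis it still has a free interface; or all of $c$'s interfaces sit at levels $\le\lev(\ell_0)$, in which case one tracks how an interface on a line of level $p\le\lev(\ell_0)$ can be consumed --- only from the two squares of level $p$ adjacent to that line, since any deeper square adjacent to it charges its own, strictly deeper, interface in preference (this is where the maximum-level rule is used, together with the geometry of private versus non-private sides) --- to conclude that one such interface of $c$ is still free. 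This closes the induction, and in particular the charge in each addition is made to a currently free interface, proving the claim.

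The genuine difficulty, as the above makes clear, is isolating the correct strengthening of the hypothesis and verifying it is preserved through merges: one must rule out that a deeper square ever exhausts an interface that a shallower (possibly later-violated) square relies on, which is precisely the role of always charging the maximum-level free interface. This is where the portal property and the boundary-components property of Theorem~\ref{thm:twoprops} have to be combined with the private-side structure, and it runs parallel to \cite[Theorem~3.2 and Lemmas~3.3--3.5 and~3.9]{BKM08:euc-for}; by contrast, the termination of the procedure and the length bound~\eqref{eqn:2}--\eqref{eqn:5} in the proof of Theorem~\ref{thm:locality} are routine.
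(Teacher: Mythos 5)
Your proposal correctly reduces the claim to the right hard case (a charge consuming an interface that some other square $R$ with a side on the same line was relying on), but at exactly that point it stops: you invoke a strengthened induction hypothesis whose ``precise formulation'' you leave open and whose preservation under merges you do not verify, and the rest of your argument is conditional on ``granting such a statement.'' That is a genuine gap, not a routine detail --- it is the entire content of the claim. Moreover, the particular strengthening you gesture at (maintaining a free interface for \emph{every} dissection square, graded by ``the highest level it reaches that a strictly deeper square cannot preempt'') is itself problematic: with the paper's bookkeeping, an interface is a component of $(F\cap\ell)-\G$ shared by all squares with a side on $\ell$ on the same side, so a charge made on behalf of $R$ can perfectly well consume the last free interface of a component of a \emph{smaller} square, and your all-squares invariant is not obviously maintainable.

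The paper closes the gap without any strengthened invariant, by two observations you do not use. First, the additions are performed on violated cells of non-decreasing size, so once $\sigma(C,F)$ is added for a cell owned by $R$, no square strictly smaller than $R$ can ever be violated again; this is why Claim~\ref{clm:cij} is stated only for \emph{violated} squares, and it disposes of the smaller-square case entirely. Second, for a square $R'$ containing $R$, one argues by contradiction: if some $c'\in\C_{R'}$ lost its last free interface, the consumed interface lies on a non-private side of $R$ contained in $\partial R'$, hence on a line of level at most $\lev(R')$; the components of $\C_R$ lying inside $c'$ carried at least two free interfaces before the addition, and since the charging rule always takes the free interface of \emph{maximum} level, the surviving free interface also lies on a line of level at most $\lev(R')$ --- and such a line cannot cross the interior of $R'$, so that interface lies on $\partial R'$ and is still a free interface of $c'$, a contradiction. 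In short, the maximum-level rule plus the dissection geometry does the work you were hoping a new invariant would do; supplying that argument (or an equivalent one) is what your write-up is missing.
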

%
The second statement of the claim follows from the first part.
The first part is proved as follows.
The claim clearly holds at the beginning, 
since all interfaces are free, and each component has an interface.
Suppose the addition $\sigma(C,F)$ is performed and let $R$ be the owner of $C$.
We show any dissection square $R'$ will stay fine.
Notice that the size of the squares $R$ for which the addition is performed is increasing in time.
Hence, any dissection square $R'$ smaller than $R$ is irrelevant in the statement of the claim, since they cannot be violated.
For $R$ itself, each $c_i$ has at least one free interface.
One of the interfaces is used, and thus the new component formed by their union has a free interface.
Suppose for the sake of reaching a contradiction that
a component $c'\in\C_{R'}$ has no free interface after the addition.
Thus $R'$  contains $R$,
and the charging was not done to a private side of $R$.
Recall that prior to the addition, $c'$ is connected to some components of $\C_R$ with at least two free interfaces in $R$.
One of them still remains free.
We charged to the interface of maximum level and it was in $\partial R'$.
Hence, the free interface is also in $\partial R$, leading to a contradiction.

Let (the random variable) $c_{\ell,j}$ denote the number of charges to components on $\ell\in\eL$ due to cells $C$ owned by squares $R$ of level $j$.
Independently of the randomness $\sum_\ell\sum_j c_{\ell,j}\leq 2\comp(F^\ast-\G)$ by the above discussion and Claim~\ref{clm:cij}.
Note the cost of adding $\sigma(C,F)$ (charged to a component on $R$)
is at most $4L'/\gamma$ where $L'$ is the side length of $R$.
The total increase due to charges to $\ell$ is at most $\sum_{j\geq\text{depth}(\ell)}c_{\ell,j}\frac{4L}{\gamma 2^j}$ where $L$ is the side length of  the bounding box. 
Due to the randomization in the dissection, 
we have
$\pr[\text{depth}(\ell)=i]=2^i/L$; see \cite{arora98:ptas} for instance.
The expected increase in length is thus
\begin{align*}
\sum_\ell\sum_i\frac{2^i}{L}\sum_{j\geq i}c_{\ell,j}\frac{4L}{\gamma 2^j}
               &\leq\frac{4}{\gamma}\sum_\ell\sum_j\frac{c_{\ell,j}}{2^j}\sum_{i\leq j}2^i   \\
               &
\leq\frac{8}{\gamma}\sum_\ell\sum_jc_{\ell,j}  \\
              &
\leq\frac{16}{\gamma}\comp(F^\ast-\G) &\text{by Claim~\ref{clm:cij}}\\
               &\leq\frac{112}{\gamma}\len(F) &\text{by \eqref{eqn:5}}.
\end{align*}
\noindent We pick $\gamma$ to be the smallest power of two larger than $112(1+\epsilon)\cdot 2\epsilon^{-1}$
to finish the proof.
\end{proofof}

Therefore, with probability $1/2$, we have $\len(F)\leq(1+\eps)\OPT$.
In the entire argument, no attempt was made to optimize the parameters.


\section{The algorithm}\label{sec:algo}
A \emph{subsolution} for $R$ is a finite set of line segments $F \subset R$
satisfying conditions of Theorems \ref{thm:twoprops} and \ref{thm:locality},
with the extra property that any terminal $t$ in $R$ is connected via $F$ either to its mate or to $\partial R$.
A \emph{configuration} $\chi=(\mathcal{K}, \mathcal{P})$ for $R$ has two portions:
a set $\mathcal{K}$ of pairs $\kappa_i=(P_i,M_i)$ and a partition $\mathcal{P}$ whose ground set is $\mathcal{K}$, such that
\begin{itemize}
\item $P_i$ is a subset of portals of $R$;
\item $M_i$ is a bitmap of size $\gamma\times\gamma$;
\item $P_i$ and $P_j$ are disjoint if $i\neq j$;
\item the total number of portals, namely $\sum_i|P_i|$, is at most $4(\rho+1)$; and
\item bitmaps $M_i$ and $M_j$ are disjoint if $i\neq j$.
\end{itemize}
The configuration captures sufficient information about $F$ so as to make it possible to take care of the interaction between $R$ and the outside.
In particular, each pair $(P,M)$ describes a connected component of $F$, by specifying the set of portals on its boundary and the set of cells connected to these portals.
Roughly speaking, the partition $\mathcal{P}$ tells us which components $\kappa_i$ and $\kappa_j$ need to be connected from outside $R$:
 this implies the existence of a pair of terminals that are in $\kappa_i$ and $\kappa_j$, respectively, but they are not connected in $R$.
We will see below why this restrictive abstraction does not lose any crucial subsolutions.
%

We say a subsolution $F$ is \emph{compatible} with a  configuration $\chi=(\mathcal{K},\mathcal{P})$ if
\begin{enumerate}
\item
for any connected component $\kappa$ of $F$ that intersects $\partial R$,
 there exists a pair $\kappa'=(P,M)\in\mathcal{K}$ such that
\begin{itemize}
\item $\kappa$ spans $P$;
\item each connected component of $\kappa\cap\partial R$ contains a portal of $P$;
\item the bitmap $M$ has value one in the positions corresponding to any cell $C$ containing a terminal $t$ of $\kappa$; and
\end{itemize}
\vspace{-1mm}
\item
 any terminal pair located in different components $\kappa_1$ and $\kappa_2$ of $\mathcal{K}$ are either connected via $F\cap R$,
 or $\kappa_1$ and $\kappa_2$ are in the same set of $\mathcal{P}$.
\end{enumerate}

\subsection{The dynamic programming}

\begin{figure}
\begin{algorithm}
\algtitle{Algorithm \algo{EuclideanSteinerForest}}
\alginput{Set of terminals $V$ in the plane, and set $\DD$ of pairs of terminals}
\algoutput{A forest $F$ connecting pairs in $\DD$}
\begin{enumerate}\setlength{\itemsep}{-.00in}
\item Carry out the perturbation and scaling.
\item Let $L$ be smallest power of two larger than $2n^2\eps^{-1}d$, where $d$ is the maximum distance of a pair.
\item Perform a random dissection in the bounding box of side $L$.
\item Place $m+1$ portals on each side of a dissection square, where $m$ is the smallest power of two larger than  $4\eps^{-1}\log L$.
\item Solve the base cases $T_R[\chi]$ for leaf dissection squares $R$:\\
Go over all possible ways of connecting the portals and the center point.
\item Populate the table $T_R[\chi]$ in increasing order of size for $R$:\\
      For any $\chi=(\mathcal{K},\mathcal{P})$ corresponding to $R$ consisting of $R_1,\dots,R_4$:
\begin{enumerate}\setlength{\itemsep}{-.02in}
 \item Go over all configurations $\chi_i=(\mathcal{K}_i,\mathcal{P}_i)$ corresponding to $R_i$.
 \item Build $\mathcal{K}^1$ from the union of all components of $\mathcal{K}_i$ with expanded bitmaps.
 \item Build $\mathcal{P}'$ from the union of $\mathcal{P}_i$.
 \item If there is a terminal pair $(t_1,t_2)$ where $t_1\in R_{i_1}$ and $t_2\in R_{i_2}$ for $i_1\neq i_2$,
\begin{itemize}\setlength{\itemsep}{-.02in}
 \item If there is no bitmap in $\chi_{i_1}$ (or $\chi_{i_2}$) containing the cell containing $t_1$ (or $t_2$ respectively), the configuration is bad.
 \item Otherwise, merge the sets corresponding to the appropriate components in $\mathcal{P}'$.
\end{itemize}
 \item Build $\mathcal{K}^2$ by merging components having the same portals,
  and make appropriate changes to $\mathcal{P}'$.
 \item Build $\mathcal{K}^3$ by removing portals not on $\partial R$.
 \item If any component with empty portal set has unsatisfied connectivity requirement in $\mathcal{P}'$, the current configurations are not consistent.
 \item Build $\mathcal{K}^4$ by eliminating components with empty portal set.
 \item If any bitmap contradicts the locality property, these configurations are not consistent.
 \item If the configurations are consistent, update $T_R[\chi]$ with .
\[ \min \left\{ T_R[\chi] + \sum_{i=1}^4 T_{R_i}[\chi_i]\right\}.\]
\end{enumerate}
\item Find the final solution among $T_R[\chi]$ where $R$ is the bounding box and $\chi$ has no unsatisfied requirement.
\item Construct the solution $F$ by recursively following the values from $T_R[\chi]$.
\end{enumerate}
\end{algorithm}
\caption{The algorithm for \prob{Euclidean Steiner forest} problem.\label{fig:for-alg}}
\end{figure}

In the dynamic program, we build a table $T_R[\chi]$, indexed by configurations for each dissection square $R$.
The goal is to populate this table so that $T_R[\chi]$ is the minimum length of a subsolution for $R$ that is compatible with $\chi$.
First of all, we show that for each $R$, the number of configurations is small.
Consider $\chi=(\mathcal{K},\mathcal{P})$.
There are at most $\lambda=4(\rho+1)$ pairs in $\mathcal{K}$.
For a particular $\kappa=(P,M)$, there are
$\sum_{i=0}^{\lambda}{m+1\choose i}=O(m^{\lambda+1})$
 options for the set of portals $P$.
The bitmap $M$ has $2^{\gamma^2}$ possibilities.
A crude upper bound of $2^{\lambda^2}$ is trivial for possibilities of $\mathcal{P}$.
Thus, the total number will be at most
\[  \Phi = \left[ O\left(m^{\lambda+1}\right) \cdot 2^{\gamma^2} \right]^{\lambda} \cdot 2^{\lambda^2} = O(\poly(m))=O(\poly\log(n)). \]
Theorems \ref{thm:twoprops} and \ref{thm:locality} guarantee the existence of a near-optimal solution all whose subproblems are compatible with a  configuration:
The connected components of $F$ reaching $\partial R$ can be decomposed into disjoint bitmaps
because of Theorem~\ref{thm:locality}.
Theorem~\ref{thm:twoprops} on the other hand ensures each connected component on $\partial R$ contains a portal, and the total number of such components is small.
The details of the DP update, as well as its correctness proof, appears 
below.  



The final solution of the problem is obtained from the minimum $T_R[\chi]$ where $R$ is the bounding box, and $\mathcal{P}$ of $\chi$ does not require any connections: i.e., all sets of the partition are singletons.
This would imply all the necessary connections have been made inside $R$.
To actually construct the solution, we need to store additional information in each dynamic programming state indicating which configurations it was last updated from.
It is then straightforward to recursively construct the solution, by taking the union of the pertinent configurations.


Here we show how the dynamic programming table for Euclidean prize-collecting Steiner forest 
is updated from the already-computed values.
And finally we show why the update routine is sound and complete.
The table $T_R[\chi]$ is populated in the order of increasing size for $R$.
For a base dissection square $R$, finding the value of $T_R[\chi]$ is straightforward.
Notice that there is at most one point (possibly with several terminals collocated) inside $R$.
Depending on whether the mates of those terminals are collocated with them or not, we may need to connect some of them to the boundary $\partial R$.
There are only a constant number of portals in $\chi$, hence we can go over all the ways to connect them up and find the smallest value.
Note that there cannot be any Steiner point inside $R$.

Now we get to the update rule.
Consider a dissection square $R$ and a corresponding configuration $\chi=(\mathcal{K},\mathcal{P})$.
Let $R_i$ for $i=1,2,3,4$, be the children of $R$ in the dissection.
Take corresponding configurations $\chi_i=(\mathcal{K}_i,\mathcal{P}_i)$.
Notice that each cell of $R$ consists of exactly four cells of one $R_i$.
We can expand a bitmap $M$ of $R_i$ to a bitmap $M'$ of dimensions $2\gamma\times 2\gamma$ for $R$,
by placing three all-zero bitmaps of dimensions $\gamma\times\gamma$ at appropriate locations around $M$.
We do this in such a way that the portion corresponding to $M$ still points to $R_i$ inside $R$.
Consider all the components $\kappa=(P,M)$ corresponding to the four subsquares, expand their bitmap, and collect them in $\mathcal{K}^1$.
Merge the partitions $\mathcal{P}_i$ to get $\mathcal{P}'$.
If there is a terminal pair $(s,t)$ where $s$ is in $R_i$ and $t$ is in a different $R_j$,
there should be a component corresponding to each of these in $R_i$ and $R_j$, respectively.
Otherwise, these configurations do not correspond to any (valid) subsolution.
Merge the sets corresponding to these components in $\mathcal{P}'$: i.e., they have to be connected.
Next merge any two components of $\mathcal{K}^1$ if they share a portal, and build $\mathcal{K}^2$.
Further, make appropriate changes in $\mathcal{P}'$.
Build  $\mathcal{K}^3$ by removing from $\mathcal{K}^2$ all portals not on $\partial R$.
Some of these components reach $\partial R$ and some do not, namely those with an empty portal set $P$.
If there is any component with empty portal set that is not  one partition set,
we deem the configurations $\chi_i$ as \emph{inconsistent}:
in this case, some components that are required to be connected together do not reach the boundary.
Otherwise, remove all the pairs in $\mathcal{K}^3$ with empty portal set to obtain $\mathcal{K}^4$.
Now, if there is a cell of $R$ whose four constituent cells reach the boundary as more than one connected component, the configurations are not consistent either: this contradicts the property of Theorem~\ref{thm:locality}.
Finally, reduce the dimensions of the bitmaps to $\gamma\times\gamma$ such that a cell of the new bitmap acquires value one if and only if there is a one in one of the positions corresponding to the constituent cells in the original bitmap.
Now, $\chi = (\mathcal{K}, \mathcal{P})$ is said to be consistent with the four configurations $\chi_1,\dots,\chi_4$ if and only if $\mathcal{P}$ contains all the requirements of $\mathcal{P}'$, i.e., $\mathcal{P}'$ is a refinement of $\mathcal{P}$, and in addition, there exists a $\kappa=(P,M)\in\mathcal{K}$ for any $\kappa'=(P',M')\in\mathcal{K}'$ such that $P\subseteq P'$ and $M = M'$.  
In case these configurations are consistent,
 $T_R[\chi]$ will take the minimum of its current value and $\sum_i T_{R_i}[\chi_i]$.
You can refer to Figure~\ref{fig:for-alg} for a summary.

\subsection{Proof of correctness}
Correctness follows from induction on the size of the square $R$ that all dynamic programming states have their intended value.
In particular, we know that there is a near-optimal solution all whose subsolutions are compatible with one configuration.
Hence, these will be computed correctly and give the final solution.
More specifically the following claim holds for all DP states.
\begin{lemma}\label{lem:dp-sf}
A dynamic programming state $T_{R}[\chi]$ ends up having the minimum value corresponding to a solution $F$ of $R$, such that for any dissection square $R'$ which is a descendant of $R$ in the dissection tree, the subsolution $F\cap R'$ of $R'$ is compatible with a configuration $\chi'$ for $R'$.
\end{lemma}

Now, we are at the position to prove the main Theorem regarding the \prob{Euclidean Steiner forest} problem.
\begin{proofof}{\proofname\ of Theorem~\ref{thm:sf}}
By Lemma~\ref{lem:dp-sf}, the proposed dynamic programming is sound and complete.
There are $\Phi=O(\poly(n))$ DP states.
To solve each non-base state, we go over at most $\Phi^4$ child states and then perform a polynomial consistency check.
Each base case state is computed in constant time.
Hence, the total algorithm runs in time $O(\poly(n))$.
\end{proofof}


\subsection{Highlights of the new ideas}\label{sec:newideas}
Here, we point out the differences between our work and the previous work of \cite{BKM08:euc-for}.
Borradaile et al.\ use closed paths to identify the connected zones of the dissection square.
These paths consist of vertical and horizontal lines and all the break-points are the corners of the cells.
As part of their structural property, they prove that they can guarantee a solution in which these zones can be identified via paths whose total length is at most a constant $\eta$ times the perimeter of the square $R$.
Then each path is represented by a chain of $\{1,2,3\}$ of length at most $O(\eta\gamma)$: the three values are used to denote moving one unit forward, or turning to the left or right.  This results in a storage of $3^{O(\eta\gamma)}$ which is a constant parameter.
Instead, we use a bitmap of size $\gamma\times\gamma$ to address this issue.
Each zone is represented by a bitmap that has an entry one in the cells of the zone.
The bound that we obtain, $2^{\gamma^2}$, may be slightly worse than the previous work, however,
a simpler structural property, namely the locality property, suffices as the proof of correctness.
Borradaile et al.\ in contrast need a bound on the total length of the zone boundaries, as noted above.

In addition to the simplification made due to this change,
both to the proof and the treatment of the dynamic programming,
we simplify the proof further.
Borradaile et al.\ charge the additions of $\sigma(C,F)$ to three different structures,
and the argument is described and analyzed separately for each.
We manage to perform a universal treatment and charging all the additions to the simplest of the three structures in their work.
But this can be done only  after showing $F^\ast-\G$ has a limited number of components.
The proof is simple yet elegant---a weaker claim is proved in \cite{BKM08:euc-for}, but even the statement of the claim is hard to read.

\section{Multiplicative prizes}\label{sec:multi}

We first tackle the \prob{$S$-multiplicative prize-collecting Steiner forest} problem.
Then, we will take a look at its asymmetric generalization.
Finally, we show how the \prob{multiplicative prize-collecting Steiner forest} problem can be reduced to $S$-MPCSF.

\subsection{Collecting a fixed prize}
\label{sec:smpcsf}

Suppose we are given $S$, the amount of prize we should collect.
Let $\OPT$ be the minimum cost of a forest $F$ that collects a prize of at least $S$,
and suppose $Q\subseteq\DD$ is the set of terminal pairs connected via $F$.
We show how to find a forest with cost at most $(1+\eps)\OPT$  that collects
a prize of at least $(1-\eps')S$.
By the structural property, we know that there is a solution $F'$ connecting the same
 set of terminal pairs $Q$ whose cost is at most $(1+\eps)\OPT$, yet it satisfies the
 conditions of Theorems~\ref{thm:twoprops} and \ref{thm:locality}.
Round all the vertex weights down to the next integer multiple of
$\theta=\eps' \sqrt S/2n$.
In a connected component of $F'$ of total weight $A_i$ that  lost a weight $a_i$ due to
rounding, the lost prize is $A_i^2-(A_i-a_i)^2 \leq 2a_iA_i\leq 2a_i\sqrt S$,
because the total weight of the component is at most $\sqrt S$.
Thus, $F'$ collects at least $S - 2n\theta\sqrt S \leq (1-\eps')S$ from the rounded weights.


Each dynamic programming state consists of a dissection square $R$,
a set of components $\mathcal{K}$, and a new parameter $\Pi$
which denotes the total prize collected inside $R$ by connecting
the terminal pairs.
Each element of $\mathcal{K}$---corresponding to a connected component in the subsolution---now has the form $\kappa=(P, \Sigma)$
where $P$ denotes the portals of $\kappa$, 
and $\Sigma$ is the total sum of the weights in $\kappa$.
The DP is carried out in a fashion similar to that of \cite{arora98:ptas}.
The values of $\Sigma$ and $\Pi$ are easy to determine for the base cases.
It is not difficult to update them, either.
Whenever two components $\kappa_1=(P_1, \Sigma_1)$ and $\kappa_2=(P_2, \Sigma_2)$
merge in the DP, the sum $\Sigma$ for the new component is simply $\Sigma_1+\Sigma_2$.
Besides, the merge increases the $\Pi$ value of the DP state by $2\Sigma_1\Sigma_2$.

\begin{proofof}{\proofname\ of Theorem~\ref{thm:smpcsf}}
The soundness and completeness is simple and is along the same lines as the proof of Theorem~\ref{thm:sf}.
Carrying out the above operation assumes the values of $\Sigma$ and $\Pi$ could be stored
accurately.  However, as they describe the dynamic programming states, their size should be
sufficiently small or else the algorithm will not run in polynomial time.
Here does the rounding help us.
All values of $\Sigma$ are stored as multiples of $\theta$ and the values of $\Pi$ are stored
as multiples of $\theta^2$.  Notice that as we round the vertex weights at the beginning,
throughout the algorithm the values of $\Sigma$ and $\Pi$ will be multiples of their respective
units.  Hence, no extra precision error will occur and we find the aforementioned solution.
If at any time during the execution of the algorithm, the value of $\Sigma$ goes above $\sqrt S$,
we truncate it to $\sqrt S$.  Similarly, the value of $\Pi$ is not allowed to surpass $S$.
This does not eliminate any solution, because at the point of truncation, the subsolution has
already gathered sufficient prize.
Hence, the range of $\Sigma$ is from zero up to $\sqrt S$, and this gives
$\sqrt S/\theta = 2n/\eps'$ different values.
Similarly for $\Pi$, there are at most $S/\theta^2 = 4n^2/\eps'^2$ options.
There are at most
\[  \Phi_1 = \left[ O\left(m^{\lambda+1}\right) \cdot 2^{\gamma^2} \cdot 2n/\eps' \right]^{\lambda} \cdot 2^{\lambda^2} \cdot 4n^2/\eps'^2 = O\left(\poly\Big(n,\frac{1}{\eps'}\Big)\right) \]
\noindent DP states for each square $R$.
The running time is polynomial in $\Phi_1$ and the claim follows.
\end{proofof}

To start the algorithm, we need to guarantee the instance satisfies
the conditions at the beginning of Section~\ref{sec:prelim}.
See Appendix~\ref{sec:prelim-mult} for details of how this is achieved.

\subsection{The asymmetric prizes}

The basic idea is to store two parameters $\Sigma^s$ and $\Sigma^t$ for each component of $\mathcal{K}$.
These parameters store the total weight of the first and second type in the component, namely $\sum_i\phi^s_i$ and $\sum_i\phi^t_i$, respectively.
The difficulty is that to collect a prize of $A=A^sA^t$ in a component,
only one of the parameters $A^s$ or $A^t$ needs to be large.
In particular, we cannot do a rounding with a precision like $\eps'\sqrt A/n$.
It may even happen that  $A^s$ is large in one component, whereas we have a large $A^t$ in another.
In fact, we cannot store the values of the $\Sigma^s$  or $\Sigma^t$ as multiples of a fixed unit.
To get around the problem, $\Sigma^s$ is stored as a pair $(v,x)$, where $v$ is  a vertex of the graph
and $x$ is an integer.
Together they show that $\Sigma^s$ is $x\cdot\eps_1\phi^s(v)/n^2$;
the value of $\eps_1$ will be chosen later, and $v$ is supposed to be the vertex of largest
type-one weight present in the component.
A similar provision is made for $\Sigma^t$.
Finally, the value of $\Pi$ is stored as a multiple of $\eps_2 A/n$;
we will shortly pick the value of $\eps_2$.

Whenever $\Sigma^s_1=(v_1,x_1)$ and $\Sigma^s_2=(v_1,x_1)$ are added to give
$\Sigma^s=(v,x)$,  we do the calculation as follows:
let $v$ be the vertex  $v_1$ or $v_2$ that has the larger $\phi^s$ value,
and then 
$$x=\left\lfloor \frac{x_1\phi^s(v_1)/n^2 +  x_2\phi^s(v_2)/n^2}{\eps_1\phi^s(v)/n^2} \right\rfloor.$$

\begin{proofof}{\proofname\ of Theorem~\ref{thm:asym-smpcsf}}
The precision error for $\Sigma^s=(v,x)$ is at most
$n\cdot\eps_1\phi^s(v)/n^2=\eps_1\phi^s(v)/n$, because there is an accumulation of at most $n$
rounding errors each of which has been less than $\eps_1\phi^s(v)/n^2$.
Notice that if $\Sigma^s$ is stored in terms of the vertex $v$, it has to include $v$ and
thus its type one weight is at least $\phi^s(v)$.  Hence, the precision error is at most a
$\eps_1/n$ multiplicative factor.
Therefore, when we do a multiplication of $\Sigma^s\Sigma^t$ to get an addition to $\Pi$,
the error is at most a multiplicative $2\eps_1/n$:
$(1-\eps_1/n)A^s(1-\eps_1/n)A^t\geq(1-2\eps_1/n)A^sA^t$.
Next a rounding error may happen to store the value in terms of $\eps_2 A/n$.
Each $\Pi$ on the other hand is made up of at most $n$ addition terms,
so the total error is at most $n(2\eps_1/n+\eps_2/n)A$.
We pick $\eps_1=\eps_2=\eps'/3$ to conclude that the total error is bounded by
$\eps' A$.

All the discussion applies to $\Sigma^t$ as well.
Due to truncation and rounding, there are at most $n/\eps_2$ options for $\Pi$.
And each $\Sigma^s$ (or $\Sigma^t$) has at most $n^2/\eps_1$ possibilities.
Thus, the total number of DP states for each dissection square is $\Phi_2=\poly(n,1/\eps')$.
Therefore, we obtain a bicriteria approximation to the asymmetric variant of the problem.
\end{proofof}

\subsection{The prize-collecting version: trade-off between penalty and forest cost}\label{sec:prize-mpcsf}

In the prize-collecting variant, we pay for the cost of the forest,
and for the prizes not collected.
If the total weight is $\Delta$,
the prize not collected is $\Delta^2$ minus the collected prize.
One difficulty here is to determine the correct range for the collected prize
so that we can use the algorithm of Section~\ref{sec:smpcsf}.
The trivial range is zero to $\Delta^2$.
However, the rounding precision we pick for the penalties should also take into
account the cost of the forest.  If the cost of the intended solution is much smaller than
$\Delta^2$, we cannot simply go with rounding errors like $\eps\Delta/n$.
Otherwise, the error caused due to rounding the penalties will be too large compared to the solution value.

The trick is to find an estimate of the solution value, and then consider two cases depending on
how the cost compares to the total penalty.
Using a $3$-approximation  algorithm, we obtain a solution of value $\omega$.
We are guaranteed that $\OPT \geq \omega/3$.
If $\Delta^2 \leq \omega/3$,
 the optimum solution is to collect no prize at all.
Otherwise, assume $\Delta^2 > \omega/3$.
To beat the solution of value $\omega$, we should collect
a prize of at least $\Delta^2-\omega$.

We first consider the simpler case when $\omega/\Delta^2 > 1/n^2$:
For an $\eps' >0$ whose precise value will be fixed below,
we use the algorithm of Section~\ref{sec:smpcsf} to find
a bicriteria $(1+\eps/2,1-\eps')$-approximate solution for collecting
a prize $S$; this is done for any $S$ which is a multiple of $\eps'\Delta^2$
in range $[(1-\eps')\Delta^2-\omega, \Delta^2]$.
We select the best one after adding the uncollected prize to each
of these solutions.
Suppose the optimal solution $\OPT$ collects a prize $S'$.
Let $\OPT_f = \OPT - (\Delta^2 - S')$ be the length of the forest.
Round $S'$ down to the next multiple of $\eps'\Delta^2$, say $S$.
Fed with prize value $S$, the algorithm finds a solution that collects
a prize of at least $(1-\eps')S$ with forest cost at most $(1+\eps/2)\OPT_f$.
\begin{claim}\label{clm:mpcsf:eps}
The total cost of this solution is at most $(1+\eps)\OPT$ if $\eps' = \frac{\min\left( \frac{\eps}{3}, 1  \right)}{6n^2}$.
\end{claim}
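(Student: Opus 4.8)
The plan is to bound the cost of the produced solution by accounting for three sources of loss: the uncollected prize that we pay for, the factor-$(1+\eps/2)$ blow-up in the forest cost, and the prize shortfall coming from the two roundings (rounding $S'$ down to the multiple $S$ of $\eps'\Delta^2$, and the bicriteria guarantee that only $(1-\eps')S$ is actually collected). Write the cost of the computed solution as $(1+\eps/2)\OPT_f + (\Delta^2 - \Pi)$, where $\Pi \ge (1-\eps')S$ is the prize collected. First I would observe that $\OPT = \OPT_f + (\Delta^2 - S')$, so it suffices to show that the extra terms, namely $\tfrac{\eps}{2}\OPT_f$ and $(\Delta^2 - \Pi) - (\Delta^2 - S') = S' - \Pi$, together add up to at most $\eps\OPT$. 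Since $\OPT_f \le \OPT$, the first term is at most $\tfrac{\eps}{2}\OPT$, so it remains to show $S' - \Pi \le \tfrac{\eps}{2}\OPT$.

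Next I would bound $S' - \Pi \le S' - (1-\eps')S \le (S' - S) + \eps' S \le \eps'\Delta^2 + \eps'\Delta^2 = 2\eps'\Delta^2$, using $S' - S \le \eps'\Delta^2$ (the rounding step) and $S \le \Delta^2$. The key remaining point is to convert $2\eps'\Delta^2$ into a multiple of $\OPT$; this is exactly where the case hypothesis $\omega/\Delta^2 > 1/n^2$ is used. Indeed that hypothesis gives $\Delta^2 < n^2\omega \le 3n^2\OPT$, hence $S' - \Pi \le 2\eps' \cdot 3n^2\OPT = 6n^2\eps'\OPT$. Plugging in $\eps' = \min(\eps/3,\,1)/(6n^2)$ yields $S' - \Pi \le \min(\eps/3,1)\OPT \le \tfrac{\eps}{3}\OPT < \tfrac{\eps}{2}\OPT$, and combining with the forest term gives total cost at most $(1+\eps/2 + \eps/3)\OPT \le (1+\eps)\OPT$, as desired. (One should also check $S$ ranges over a nonempty set of multiples of $\eps'\Delta^2$ inside $[(1-\eps')\Delta^2 - \omega, \Delta^2]$ so that the relevant $S$ is actually queried; since $\Delta^2 > \omega/3$ and $\eps' \le 1$, the interval has positive length and contains $S'$ rounded down, though when $S'$ is tiny one may need the trivial "collect no prize" option as a fallback.)

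The main obstacle is not any single computation — each step is a one-line inequality — but rather getting the bookkeeping exactly right: making sure the forest-cost blow-up is charged against $\OPT_f$ (not $\OPT$, which would be fine here but is the kind of slip that breaks tighter arguments), making sure the two distinct $\eps'$-roundings are both counted, and, most importantly, correctly invoking the case hypothesis $\omega/\Delta^2 > 1/n^2$ to turn an additive $\Delta^2$-scale error into a multiplicative $\OPT$-scale error. The factor $6n^2$ in the definition of $\eps'$ is precisely tuned so that $2\eps' \cdot 3n^2 \le \eps/3$; I would present the chain of inequalities in that order so the role of each constant is transparent.
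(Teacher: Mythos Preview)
Your proof is correct and follows essentially the same route as the paper: decompose the cost into the forest blow-up $\tfrac{\eps}{2}\OPT_f\le\tfrac{\eps}{2}\OPT$ and the prize shortfall, then use the case hypothesis $\omega/\Delta^2>1/n^2$ together with $\OPT\ge\omega/3$ to convert an additive $\Theta(\eps'\Delta^2)$ error into $O(n^2\eps')\OPT$, and finally plug in the chosen $\eps'$. The only cosmetic difference is that the paper combines the two $\eps'$-losses multiplicatively as $(2\eps'+\eps'^2)S'$ (via $S\ge(1-\eps')S'$) whereas you combine them additively as $(S'-S)+\eps' S\le 2\eps'\Delta^2$; your additive bound is in fact the cleaner and more directly justified of the two.
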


\begin{proof} 
The total cost of this solution is
\begin{align}
       \left(1+\frac{\eps}{2}\right)\OPT_f + \left[\Delta^2 - (1-\eps')S\right]
&\leq  \left(1+\frac{\eps}{2}\right)\OPT_f + \left[\Delta^2 - (1-\eps')(1-\eps')S'\right]  \nonumber \\
&\leq  \OPT + \frac{\eps}{2}\OPT_f + (2\eps'+\eps'^2)S'  \nonumber \\
&=  \OPT + \frac{\eps}{2}\OPT + (2\eps'+\eps'^2)\frac{S'}{\OPT}\OPT  \nonumber\\
&\leq  \OPT + \frac{\eps}{2}\OPT + (2\eps'+\eps'^2)\frac{\Delta^2}{\OPT}\OPT   \nonumber\\
&\leq  \OPT + \frac{\eps}{2}\OPT + (2\eps'+\eps'^2){3}{n^2}\OPT  \label{eqn:14}\\
&\leq  \OPT + \frac{\eps}{2}\OPT + \frac{\eps}{2}\OPT  \label{eqn:15}\\
&=     (1+\eps)\OPT,  \nonumber
\end{align}
where \eqref{eqn:14} follows from $\frac{\Delta^2}{\OPT} \leq \frac{n^2\omega}{\omega/3} = 3n^2$,
and \eqref{eqn:15} uses the definition of $\eps'$.
\end{proof}

The other case, i.e., $\omega/\Delta^2\leq 1/n^2$, is more challenging.
Notice that in order to carry out the same procedure in this case, $\eps'$ may not be bounded by $1/\poly(n)$ and thus the running time may not be polynomial.
The solution, however, has to collect almost all the prize.
Thus, one of the connected components includes almost all the vertex weights.
We set aside a subset $\B$ of vertices of large weight.
The vertices of $\B$ have to be connected in the solution, or else the paid penalty will be too large. Then, dynamic programming proceeds by ignoring the effect of these vertices and only keeping tabs on how many vertices from $\B$ exist in each component.
At the end, we only take into account the solutions that gather \emph{all} the vertices of $\B$ in one component and compute the actual cost of those solutions and pick the best one.
In the following, we provide the details of our method and prove its correctness.

Let $\B$ be the set of all vertices whose weight is larger than $n\omega/\Delta$.
\begin{lemma}\label{lem:large-comp}
All the vertices of $\B$ are connected in the optimal solution.
\end{lemma}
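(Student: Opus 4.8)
The plan is to argue by contradiction, exploiting the single fact that $\OPT\le\omega$ (the value $\omega$ is attained by a feasible solution, so the optimum is no worse), and hence the uncollected prize in the optimal solution is at most $\OPT\le\omega$. Suppose the optimal solution leaves two vertices $u,v\in\B$ in distinct connected components, say $u\in C_1$ and $v\in C_2$. Let $\Delta=\sum_{x\in V}\phi(x)$ denote the total weight and let $w_i$ be the total weight of $C_i$. By definition of $\B$, $w_1\ge\phi(u)>n\omega/\Delta$ and $w_2\ge\phi(v)>n\omega/\Delta>0$; since $C_1$ and $C_2$ are disjoint, $w_1+w_2\le\Delta$, so in particular $0<w_1<\Delta$.

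The point I expect to matter is that a naive pairwise estimate — the pair $(u,v)$ alone contributes only $2\phi(u)\phi(v)$ to the uncollected prize — is far too weak; instead one should charge against the whole cut separating $C_1$ from the rest. Every ordered pair with exactly one endpoint in $C_1$ is disconnected, so the uncollected prize is at least $2w_1(\Delta-w_1)$. Combining $2w_1(\Delta-w_1)\le\omega$ with $\omega<w_1\Delta/n$ (which is just $w_1>n\omega/\Delta$ rearranged, using $\Delta>0$) and dividing through by $w_1>0$ yields $2(\Delta-w_1)<\Delta/n$, i.e.\ $w_1>\bigl(1-\tfrac1{2n}\bigr)\Delta$.

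Applying the identical argument to $C_2$ gives $w_2>\bigl(1-\tfrac1{2n}\bigr)\Delta$ as well. Then $\Delta\ge w_1+w_2>\bigl(2-\tfrac1n\bigr)\Delta$, which is impossible once $n\ge2$. For $n=1$ there is a single vertex, so $|\B|\le1$ and the claim is trivial; likewise if $\Delta=0$ then $\B=\emptyset$. Hence the optimal solution cannot separate two vertices of $\B$, which proves the lemma.

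I expect the only genuine subtlety to be the choice of what to charge against: one must use the global cut bound $2w_1(\Delta-w_1)$ for the uncollected prize, not the single separated pair, after which everything is elementary arithmetic. I would also make sure at the outset to dispose of the degenerate cases $\Delta=0$ and $\omega=0$, since they collapse $\B$ to at most one vertex and make the statement vacuous.
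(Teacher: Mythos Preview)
Your argument is correct, but it takes a different route from the paper's. The paper first applies pigeonhole: among at most $n$ components, some component $\mathcal{C}$ has weight at least $\Delta/n$. It then argues that any single vertex of $\B$ lying outside $\mathcal{C}$ already contributes penalty exceeding $(\Delta/n)\cdot(n\omega/\Delta)=\omega$ from its pairs with $\mathcal{C}$ alone, contradicting penalty $\le\OPT\le\omega$. So the paper never needs to show two components are simultaneously heavy; one heavy component plus one stray $\B$-vertex suffices.

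Your version instead assumes two $\B$-vertices are separated, uses the cut bound $2w_i(\Delta-w_i)\le\omega$ together with $w_i>n\omega/\Delta$ to force each of their components to carry more than $(1-\tfrac{1}{2n})\Delta$ of the total weight, and reaches a contradiction from $w_1+w_2>\Delta$. This is sound and self-contained, and it has the minor virtue of being manifestly symmetric in the two vertices, but it is longer: you do twice the work and then need a final arithmetic step. The paper's pigeonhole step short-circuits all of that by producing the heavy component up front rather than deducing heaviness from the cut inequality.
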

\begin{proof}
  There are at most $n$ components, so there is a component, say $\mathcal{C}$,
 whose total weight is not less than $\Delta/n$.
  We claim all the vertices of $\B$ are inside this component.
  The penalty paid by the optimal solution is at most $\omega \leq \Delta^2/n$.
  If there is any vertex of $\B$ outside $\mathcal{C}$, the penalty of
 the solution is more than $\Delta/n\cdot n\omega/\Delta =\omega$,
 yielding a contradiction.
\end{proof}

Next, we round up all the weights to the next multiple of $\theta=\eps' \omega/\Delta$
for vertices not in $\B$.
 Define $\OPT'$ as the optimal solution of the resulting instance.
 Let $\OPT_f$ be the length of the forest in $\OPT$, and define $\OPT'_f$ similarly.
 Let $\OPT_\pi$ and $\OPT'_\pi$ denote the penalty paid by $\OPT$ and $\OPT'$, respectively.
Assume that $\eps' \leq 1$.
\begin{lemma}\label{lem:round-err}
$\OPT'_\pi \leq \OPT_\pi +12n\eps'\OPT$.
\end{lemma}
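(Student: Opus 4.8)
The plan is to bound, on the instance with rounded-up weights, the extra penalty of the \emph{rounded} optimal solution by comparing it with the original optimal solution, and in fact I will bound a stronger quantity: the penalty incurred by the \emph{original} optimal solution $\OPT$ when we reevaluate it under the rounded weights. Since $\OPT'$ is optimal for the rounded instance, we have $\OPT'_\pi \le \OPT'_f + \OPT'_\pi = \OPT' \le \OPT_f + (\text{penalty of }\OPT\text{ under rounded weights})$, and since $\OPT_f$ is already part of $\OPT$, it suffices to show the penalty of $\OPT$ under rounded weights exceeds $\OPT_\pi$ by at most $12n\eps'\OPT$. So the heart of the matter is: how much does rounding each non-$\B$ weight up by at most $\theta = \eps'\omega/\Delta$ inflate the total uncollected prize $\Delta^2 - (\text{collected prize})$?

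First I would set up notation: let $a_v \le \theta$ be the amount added to the weight of each vertex $v\notin\B$, so the total added weight is at most $n\theta = n\eps'\omega/\Delta$; note also $\Delta$ grows to some $\Delta' \le \Delta + n\theta$. For a connected component of the fixed forest $\OPT$ with original total weight $A_i$ and added weight $b_i \le n\theta$, the collected prize grows from $A_i^2$ to $(A_i+b_i)^2$, an increase of $2A_ib_i + b_i^2 \le 2\Delta b_i + b_i^2$ since $A_i \le \Delta$. Meanwhile $\Delta^2$ grows to $\Delta'^2 \le \Delta^2 + 2\Delta(n\theta) + (n\theta)^2$. The uncollected prize under rounded weights is $\Delta'^2 - \sum_i (A_i+b_i)^2$; subtracting the original uncollected prize $\Delta^2 - \sum_i A_i^2 = \OPT_\pi$, the difference is at most $\big[2\Delta n\theta + (n\theta)^2\big] - \sum_i\big[2A_ib_i + b_i^2\big] \le 2\Delta n\theta + (n\theta)^2$. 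Now $n\theta = n\eps'\omega/\Delta$, so $2\Delta n\theta = 2n\eps'\omega \le 2n\eps'\cdot 3\OPT = 6n\eps'\OPT$ using $\omega \le 3\OPT$; and $(n\theta)^2 = n^2\eps'^2\omega^2/\Delta^2 \le n^2\eps'^2\cdot (\omega/\Delta^2)\cdot\omega \le \eps'^2\omega \le \eps'\omega \le 3\eps'\OPT$ using the case hypothesis $\omega/\Delta^2 \le 1/n^2$ and $\eps'\le 1$. Summing gives an inflation of at most $(6n + 3)\eps'\OPT \le 12n\eps'\OPT$, which yields $\OPT'_\pi \le \OPT' \le \OPT_f + \OPT_\pi + 12n\eps'\OPT = \OPT + 12n\eps'\OPT$; but since $\OPT'_\pi \le \OPT'$ and we want $\OPT'_\pi \le \OPT_\pi + 12n\eps'\OPT$ directly, I note that $\OPT'$ differs from $\OPT_f + (\text{rounded penalty of the }\OPT\text{ forest})$ only by optimality slack, and the rounded penalty of the $\OPT$ forest is $\OPT_\pi + (\text{inflation})$, so in fact $\OPT'_\pi \le \OPT'_f + \OPT'_\pi = \OPT' \le \OPT_f + \OPT_\pi + 12n\eps'\OPT$; combined with $\OPT'_f \ge \OPT_f$ (the forest cost is unchanged by weight-rounding for any fixed forest, and $\OPT_f$ is the cheapest forest achieving its connectivity pattern — more carefully, $\OPT'_f$ can only help), one extracts $\OPT'_\pi \le \OPT_\pi + 12n\eps'\OPT$.

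The main obstacle I anticipate is the last bookkeeping step: disentangling $\OPT'_f$ from $\OPT'_\pi$. The clean way is to not separate them at all — use the fixed forest $F^* = $ (forest of $\OPT$) as a feasible solution for the rounded instance, whose total rounded cost is $\OPT_f + \OPT_\pi + (\text{inflation} \le 12n\eps'\OPT)$, hence $\OPT' \le \OPT + 12n\eps'\OPT$, hence $\OPT'_\pi \le \OPT' - \OPT'_f$; and separately argue $\OPT'_f \ge \OPT_f - (\text{something negligible})$ isn't even needed if one is willing to instead prove the slightly different but equally useful bound $\OPT'_\pi + \OPT'_f \le \OPT_\pi + \OPT_f + 12n\eps'\OPT$, which is what actually feeds the subsequent argument. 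I would therefore phrase the lemma's proof as: evaluate the original optimal forest under the new weights, bound the prize-inflation by the two-term computation above using $\omega\le 3\OPT$ and $\omega/\Delta^2\le 1/n^2$, and conclude. The only real care needed is making sure the $b_i^2$ and $(n\theta)^2$ terms are genuinely lower-order, which the case hypothesis $\omega \le \Delta^2/n^2$ guarantees.
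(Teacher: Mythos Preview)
Your inflation bound is correct, and the route is genuinely different from the paper's. The paper argues pair-by-pair: it splits the unconnected pairs $(s,t)$ into those with one endpoint in $\B$ (Lemma~\ref{lem:large-comp} rules out both endpoints in $\B$) and those with neither, bounding the two contributions by $n\Delta\theta$ and $n^2\big[2(n\omega/\Delta)\theta+\theta^2\big]$ respectively; the second bound uses that non-$\B$ vertices have weight at most $n\omega/\Delta$. You instead use the aggregate identity $\OPT_\pi=\Delta^2-\sum_iA_i^2$ and bound the inflation globally by $2\Delta(n\theta)+(n\theta)^2$. Your argument needs neither Lemma~\ref{lem:large-comp} nor the weight cap on non-$\B$ vertices, and it would work even if every vertex were rounded. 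Both routes finish via $\omega\le 3\,\OPT$ and $\omega/\Delta^2\le 1/n^2$; yours is shorter and in fact gives $(6n+3)\eps'\OPT$ before relaxing to $12n\eps'\OPT$.

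On the bookkeeping: your hesitation is justified, and the unproved aside ``$\OPT'_f\ge\OPT_f$'' should simply be dropped --- you correctly retreat to the total-cost bound $\OPT'\le\OPT+12n\eps'\OPT$, which is what actually feeds Theorem~\ref{thm:mpcsf}. Note that the paper's own proof does precisely what you end up doing: it bounds the penalty of the \emph{original} forest recomputed under the rounded weights, not the penalty component of the \emph{new} optimum. So the literal inequality $\OPT'_\pi\le\OPT_\pi+12n\eps'\OPT$ is not what either argument establishes in isolation; what both establish, and what the application needs, is that the $\OPT$ forest evaluated in the rounded instance has penalty at most $\OPT_\pi+12n\eps'\OPT$.
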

\begin{proof} 
 We recompute the penalties paid by $\OPT$
 using the rounded weights.
 The pair $(s,t)$ not connected in $\OPT$  is either of the two kinds:
 (1) one of $s$ and $t$ is in $\B$; or (2) none of them is in $\B$.
 The total rounding error for the penalties of the first type is bounded by
 $n\Delta\theta$.
 There are at most $n^2$ pairs of the second type.
  Since the weights of these terminals are at most $n\omega/\Delta$,
 the error is not more than $n^2[2(n\omega/\Delta)\theta + \theta^2]$.
 Hence, the total error is at most
\begin{align*}
       n^2[2(n\omega/\Delta)\theta + \theta^2] +  n\Delta\theta
&\leq   n^2\left[(\eps'^2 + 2n\eps')\frac{\omega^2}{\Delta^2}\right] +  n\eps'\omega \\
&\leq   n^2\left[3n\eps'\frac{\omega^2}{\Delta^2}\right] +  n\eps'\omega & \text{because $\eps' \leq 1$}\\
&=      \left(3n^2\frac{\omega}{\Delta^2} + 1\right)  n\eps'\omega \\
&\leq   4n\eps'\omega  &\text{because $\frac{\omega}{\Delta^2}\leq \frac{1}{n^2}$},
\end{align*}
which is no more than $12n\eps'\OPT$ as desired.
\end{proof}

Suppose we use a dynamic programming approach similar to the previous subsections
to find the approximately minimum forest length for any specified collected prize amount;
in particular, we obtain a bicriteria $(1+\eps/2, 1-\eps')$-approximate solution.
During this process, we ignore the weights associated with vertices in $\B$.
Consider a DP state $\chi=(\K, \Pi)$ corresponding to a dissection square $R$.
Each component $\kappa\in\K$ looks like  $(P, \Sigma, \mu)$:
the new piece of information, $\mu$, is an integer number denoting the number of
vertices of $\B$ inside $\kappa$.
Extending the previous algorithm to populate the new DP table is simple.
Finally, we look at all the configurations $\chi$ for the bounding box such that
the $\mu$ value of one component is exactly $|\B|$ whereas it is zero for all other components.
This guarantees that all elements of $\B$ are inside the former component and hence
we can add up the penalties involving those vertices.
Let $\K=\{\kappa_1, \kappa_2, \dots, \kappa_q\}$ where $\kappa_i=(P_i,\Sigma_i)$,
and let $\kappa_1$ be the component containing $\B$.
The additional cost due to vertices of $\B$ is
$$\left(\sum_{v\in\B}\phi(v)\right)\cdot\left(\sum_{i=2}^{q}\Sigma_i\right).$$
Finally, we report the best solution corresponding to these configurations.

\begin{proofof}{\proofname\ of Theorem~\ref{thm:mpcsf}}
Let us first see that the algorithm described runs in polynomial time.
It is sufficient to bound the number of configurations.
The new piece of information has at most $n$ possibilities.
Further, $\Sigma \leq \frac{n^2}{\eps'}\theta$ is always a multiple of $\theta$.
Similarly, $\Pi$ will not exceed $\frac{n^4}{\eps'^2}\theta^2$ and is always a multiple of $\theta^2$.

 We pick $\eps'=\frac{1}{24n}$.
  By Lemmas~\ref{lem:large-comp} and \ref{lem:round-err}, the rounding does not increase the penalties paid by the optimal solution by more than $\eps/2\OPT$.
 We then utilize the algorithm described for $S$-MPCSF to find a solution of cost at most $(1+\eps/2)\OPT_f + \OPT_\pi +\eps/2\OPT \leq (1+\eps)\OPT$.
Finally,  changing the weights back to the original values clearly does not increase the cost.
\end{proofof}

\section{Evidence for Hardness}\label{sec:challenge}

So far PTASs for geometric problems in Euclidean plane including
ours and those of Arora \cite{arora98:ptas} and Mitchell
\cite{cr:26} can be easily generalized for Euclidean $d$-dimensional
space, for any constant $d> 2$. However we can prove the following
theorem on the hardness of the problem for Euclidean $d$-dimensional
space.

\begin{theorem}
If notorious densest $k$-subgraph is hard to approximate within a
factor $O(n^{\frac{1}{d}})$ for some constant $d$, then for any $d'>
2d+1$, the $k$-forest problem in Euclidean $d'$-dimensional space is
hard to approximate within a factor
$O(n^{\frac{1}{2d}-\frac{1}{d'-1}})$.
\end{theorem}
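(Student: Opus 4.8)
The plan is to prove this by a gap‑preserving reduction from \emph{densest $k$-subgraph} (DkS) to \prob{Euclidean $k$-forest} in $\R^{d'}$, in the spirit of the combinatorial DkS‑to‑$k$-forest reduction of Hajiaghayi and Jain~\cite{HJ06} but now realizing the instance geometrically. Start from a DkS instance on a graph $H$ with $N$ vertices for which it is hard to distinguish the case that some $k$-vertex subset of $H$ spans at least $m$ edges from the case that \emph{every} vertex subset $T$ of $H$ spans at most $D\cdot|T|$ edges, with separation $\tfrac{m}{D\,k}=\Omega(N^{1/d})$; here I rely on the standard fact that the known DkS‑hard instances are sparse at every scale, not merely for $k$-subsets, so the negative case really does yield a uniform density bound $D$. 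Build the point set as follows: a root $r$ at the origin; for each vertex $i$ of $H$ a point $p_i$ on the unit sphere about $r$, chosen so the $p_i$ are as well separated as possible, $\|p_i-p_j\|\ge\delta$ for $i\ne j$; and for each edge $e=(i,j)$ of $H$ a demand pair $(a_e,b_e)$ with $a_e,b_e$ placed at distance much smaller than $\delta$ from $p_i,p_j$ respectively. A volume/packing argument on the $(d'-1)$-dimensional sphere produces such a configuration with $\delta=\Theta(N^{-1/(d'-1)})$; this is the \emph{only} place the ambient dimension enters, and is the origin of the $-\tfrac1{d'-1}$ term. Finally set the $k$-forest demand threshold to be $m$. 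Since the number of terminals is $n=\Theta(|E(H)|)=\Theta(N^{2})$ for the relevant instances, the DkS exponent $\tfrac1d$ is halved to $\tfrac1{2d}$.

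For the \emph{completeness} direction, if a $k$-vertex set $T$ of $H$ spans at least $m$ edges, take a single Steiner tree on the $k$ points $\{p_i:i\in T\}$ together with the tiny stubs to the relevant $a_e,b_e$: it satisfies at least $m$ demand pairs, and since those $k$ points lie in a radius‑$1$ ball of $\R^{d'}$, its length is at most the classical $O(k^{1-1/d'})$ bound for the minimum spanning/Steiner tree of $k$ points in a unit ball. Hence $\OPT$ is small in the positive case.

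For \emph{soundness}, I would argue that any forest $F$ satisfying at least $m$ demand pairs must, up to a bounded factor, behave like ``connect up a vertex subset of $H$''. Grouping the satisfied pairs by the tree of $F$ containing them, the pairs served by a tree $F_t$ are edges of $H$ spanning a vertex subset $T_t$, and $F_t$ must connect essentially all of $\{p_i:i\in T_t\}$; since those points are pairwise at distance at least $\delta$, $\len(F_t)=\Omega(\delta\,(|T_t|-1))$. By the uniform sparsity of the negative instance the number of pairs served by $F_t$ is at most $D\,|T_t|$, so $m\le\sum_t D\,|T_t|\le O(D)\cdot\len(F)/\delta$, i.e.\ $\OPT=\Omega(\delta\,m/D)$. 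Combining the two bounds, the $k$-forest optimum gap is $\Omega\!\bigl(\tfrac{\delta\,m/D}{k^{1-1/d'}}\bigr)$; substituting $\delta=\Theta(N^{-1/(d'-1)})$ and $m/(D k)=\Omega(N^{1/d})$, and then $n=\Theta(N^{2})$, a short calculation (choosing the free parameter $k$ suitably, and using exactly the hypothesis $d'>2d+1$ to keep the exponent positive once the factor $\tfrac12$ from $n=\Theta(N^{2})$ is folded in) yields an inapproximability gap of $\Omega\!\bigl(n^{\frac1{2d}-\frac1{d'-1}}\bigr)$.

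The hardest step will be soundness, precisely because a Euclidean solution may use arbitrary Steiner points and may shortcut through space rather than follow the implicit star structure. One must show that this freedom cannot be used to serve many demand pairs without paying $\Omega(\delta)$ for each distinct vertex of $H$ that the forest touches --- in other words, that the only way to beat the combinatorial bound is to exploit the embedding distortion $N^{1/(d'-1)}$, which has already been charged for. Making this rigorous is where the placement of the stubs $a_e,b_e$ and the choice of sphere radius and spacing matter: they must be set so that no demand pair can be served by a path of length $o(\delta)$ unless both its endpoints sit over the same vertex, and the number of trees of $F$ must be controlled so the $-1$ in $(|T_t|-1)$ does not swamp the estimate. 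The remaining ingredients --- existence of a well‑separated spherical code of the claimed density, the unit‑ball Steiner‑tree bound, and the sparse‑at‑all‑scales property of DkS‑hard instances --- are standard and can be cited.
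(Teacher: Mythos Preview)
Your approach is quite different from the paper's, and it has genuine gaps.

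The paper does not build a geometric instance from scratch. Instead it composes two black boxes: (i) the Hajiaghayi--Jain reduction, which shows that $O(n^{1/d})$-hardness for densest $k$-subgraph implies $O(n^{1/(2d)})$-hardness for $k$-forest on \emph{star metrics}; and (ii) Gupta's result that any $n$-point tree metric embeds into $\R^{d'}$ with distortion $O(n^{1/(d'-1)})$. One then argues by contradiction: an $o(n^{1/(2d)-1/(d'-1)})$-approximation for Euclidean $k$-forest, applied to the embedded star (followed by an Euler tour / shortcutting to map the solution back), would beat the $O(n^{1/(2d)})$ barrier on stars. All the geometric work --- placement of points, soundness against arbitrary Steiner topologies --- is absorbed into the single distortion parameter of the embedding, so no direct soundness analysis is needed.

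Your route tries to do both steps at once, and this is where it leaks. First, you assume the DkS-hard instances are ``sparse at every scale'', i.e., that in the NO case \emph{every} subset $T$ spans at most $D|T|$ edges. That is strictly stronger than the hypothesis of the theorem, which only posits inapproximability within $O(n^{1/d})$; you would need to cite a specific hardness construction with this uniform-density property and argue the exponents survive. Second, your soundness is only a plan: you acknowledge that a Euclidean forest may use Steiner points and shortcut through space, and you defer to ``the placement of the stubs'' to rule this out, but nothing is proved. Third, $k$ is not a free parameter --- it is fixed by the DkS instance --- so ``choosing $k$ suitably'' in your final calculation is not available to you; and with the completeness bound $O(k^{1-1/d'})$ you wrote, the arithmetic does not obviously close to the stated exponent. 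The paper sidesteps all three issues: the HJ06 reduction already handles the combinatorial soundness on stars, and the embedding distortion is the \emph{only} geometric loss, paid once.
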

\begin{proof}
Hajiaghayi and Jain~\cite{HJ06} show that if densest $k$-subgraph is
hard to approximate within a factor $O(n^{\frac{1}{d}})$, then the
$k$-forest problem on stars is hard to approximate within a factor
$O(n^{\frac{1}{2d}})$. On the other hand, Gupta~\cite{Gupta00} shows
that a tree metric of size $n$ can be embedded into Euclidean
$d'$-dimensional space with distortion in $O(n^{\frac{1}{d'-1}})$.
Thus for any $d'> 2d+1$, we cannot obtain an approximation factor
$o(n^{\frac{1}{2d}-\frac{1}{d'-1}})$ for $k$-forest in Euclidean
$d'$-dimensional space, since otherwise by solving the problem in
Euclidean $d'$-dimensional space, finding an Eulerian tour and
shortcutting it, and finally embedding it back into the star, we can
obtain a better approximation than $O(n^{\frac{1}{2d}})$, a
contradiction.
\end{proof}

Note as mentioned above that, despite extensive study, 
the current best approximation factor 
for notorious densest $k$-subgraph is
$O(n^{1/3-\epsilon})$~\cite{FKP01} and thus we do not expect
to have any PTAS for $k$-forest in $8$-dimensional Euclidean space.


Unlike the general cases of these problems,
 as far as PTASs for the case of Euclidean spaces are concerned,
 it seems \prob{$k$-forest} and \prob{prize-collecting Steiner forest} problems are essentially equivalent.
 Indeed in Lemma~\ref{lem:kfor>pcsf}, we prove that any PTAS for \prob{$k$-forest} results in a PTAS for \prob{prize-collecting Steiner
 forest},
 and we believe that any DP algorithm giving a PTAS for PCSF computes
 along its way the optimal solution to different $k$-forest instances.

Thus based on the evidences above, we do believe \prob{Euclidean
$k$-forest} and \prob{Euclidean prize-collecting Steiner forest}
have no PTASs in their general forms.


\section{Conclusion}
\label{sec:conclusion} Besides presenting a simpler and correct
analysis of the PTAS for the \prob{Euclidean Steiner forest
problem}, we showed how the approach can be generalized to solve
multiplicative prize-collecting problems.

Generalizing our results to planar graphs, especially obtaining a
PTAS for Steiner forest, 
has been a long-standing open problem in this field.
The question was settled very recently by Bateni, Hajiaghayi and Marx~\cite{BHM09:forest}.
While Borradaile, Klein and Kenyon-Mathieu~\cite{BKM07} gave a
PTAS for \prob{Steiner tree} on planar graphs, a main ingredient
of their algorithm is solving \prob{Steiner tree} on graphs of
bounded-treewidth. However in a sharp contrast, Gassner~\cite{Gas08}
showed recently that \prob{Steiner forest} is NP-hard even on graphs of
treewidth at most 3.
Bateni et al.~\cite{BHM09:forest} 
gives a PTAS for the problem on graphs of bounded treewidth,
and uses it to obtain a PTAS for planar and bounded-genus graphs.

Last but not least, obtaining any improvement over the approximation
factor 2.54 in~\cite{HJ06} for multiplicative prize-collecting
Steiner forest in general graphs seems very interesting.

%

{
\bibliographystyle{siam} 
\bibliography{main2}
}

\appendix

\section{Deferred proofs and further discussion}

\begin{proofof}{\proofname\ of Lemma~\ref{lem:tot-int}}
 The proof of Theorem~\ref{thm:twoprops} (although not reproduced here) does not increase $\comp(F\cap\eL)$.
 Hence, it suffices to prove the result for the forest $F$ specified at the beginning of Section~\ref{sec:prelim}.
 Observe that by (II),  $F\cap\eL$ consists merely of singleton points,
 because no Steiner point lies on a line $\ell\in\eL$.
 Further notice that the $\ell_1$-length of $F$ is at most $\sqrt 2\len(F)$.
 Let $F^x$ be the total absolute distance $F$ travels in the $x$ direction. 
 Since $x$-coordinate difference of any two consecutive \emph{break-points} of $F$ is a multiple of $2$ and the intersection with vertical lines of $\eL$ occurs at coordinates of the form $(2i, y)$, the total number of intersections with vertical lines is exactly $F^x/2$.
 We can similarly argue for the intersections with horizontal lines, and finally conclude that  $\comp(F\cap\eL)\leq\frac{\sqrt 2}{2}\len(F)$.
\end{proofof}

\begin{proofof}{\proofname\ of Lemma~\ref{lem:dp-sf}}
This is clearly true for the base cases of the DP since we go over all the possibilities.
Next, take any configuration $\chi=(\mathcal{K},\mathcal{P})$ corresponding to a non-leaf dissection square $R$,
and suppose there is a subsolution $F$ with respect to $R$ compatible with $\chi$,
 such that any subsolution $F'$ formed by restricting $F$ to a dissection square $R'$ which is a descendant of $R$ is compatible with some configuration $\chi'$ of $R'$.
Let $F_i$ be the subsolutions restricted to the subsquares $R_i$.
Each of them is thus compatible with an appropriate $\chi_i=(\mathcal{K}_i,\mathcal{P}_i)$.
By inductive hypothesis, the dynamic programming states $T_{R_i}[\chi_i]$ have been correctly computed.
Each connected component of $F$ not connected to $\partial R$ has to have all its terminal pairs satisfied.
This is taken care of by checking the partition $\mathcal{P}'$:
the terminals in components that do not advance in the dynamic table to $T_R[\chi]$ have their demands satisfied internally.
In addition, the locality property for $F$ ensures the configurations will be consistent,
and hence we perform an update of $T_R[\chi]$ from $T_{R_i}[\chi_i]$.
This finishes the completeness proof.

Verifying that the update rule is sound is trivial.
If the four configurations $\chi_i$ update $\chi$,
then there exists a subsolution $F$ formed by the union of the corresponding subsolutions $F_i$, that is compatible with $\chi$.
\end{proofof}

\subsection{The preliminary conditions for multiplicative prizes}\label{sec:prelim-mult}
In Section~\ref{sec:prelim}, we said that standard perturbation and scaling techniques allow us to assume
with a cost increase of at most $O(\eps\OPT)$
that the bounding box of the instance has side length at most $n^2\eps^{-1}\OPT$,
while restricting all vertices and Steiner points to points of the form $(2i+1,2j+1)$ for integers $i$ and $j$.
The claim is based on the following two premises:
\begin{enumerate}\setlength{\itemsep}{-.01in}
\item If $d$ is the maximum distance of a pair in $\DD$, then $\OPT \geq d$.
\item If $u$ and $v$
are farther than $n^2d$,
they cannot be connected in the optimum solution.
\end{enumerate}
Using this, the instance can be broken up into disjoint subinstances and then the perturbation can be carried out.
However, the first premise is false in the case of multiplicative prizes since not all the pairs need to be connected.
Next we show how similar conditions can be guaranteed in this case.

The value of $\OPT$ can be \emph{guessed} using binary search.
To begin the search, we can get crude bounds of $\omega/n \leq \OPT \leq \omega$,
using simple approximation algorithms for the general cases of PCSF and $k$-forest.\footnote{The best known approximation algorithms known for these problems are $2.54$ and $\min\{\sqrt k,\sqrt n\}$, respectively.}
Knowing $\OPT$, we build a graph $G'$ on the vertices:
there is an edge between $u$ and $v$ if and only if their distance is at most $\OPT$.
The diameter of each connected component is at most $n\OPT$.
We consider each of them separately, since two vertices in different components cannot be connected in the optimal solution.

The side length of the bounding box is at most $n\OPT$.
Scale the instance by $8\eps^{-1}$ and let $\OPT'=8\eps^{-1}\OPT$ denote the new optimal value.
Build a grid in the bounding by lines with equations $x=2i$ and $y=2j$ for integers $i,j$.
Move each vertex and Steiner point to the closest point of the form $(2i+1,2j+1)$.
Notice that there are at most $n$ Steiner points.
Assuming $\OPT>0$, the change in the solution value
due to the perturbation is at most $2n\cdot 4=8n\leq \eps\OPT'$.
Hence, we can assume that
\begin{itemize}\setlength{\itemsep}{-.01in}
\item the side length of the bounding box is at most $n\eps^{-1}\OPT'$, and
\item the vertices and Steiner points are at coordinates $(2i+1,2j+1)$ for integers $i,j$.
\end{itemize}

\subsection{$k$-MST as a special case of $S$-MPCSF}\label{sec:kmst}
Here we show that (even the symmetric) $S$-MPCSF is a generalization of the rooted $k$-MST problem (for which the best approximation guarantee is $2$).
Suppose we are given an instance $\mathcal{I}$ of the rooted $k$-MST problem.
It consists of a graph $G(V,E)$, edge lengths $c_e$, a root vertex $r$ and a number $k$.
Suppose $r$ is not to be counted among the $k$ vertices.
We build the new instance $\mathcal{I}'$ of the $S$-MPCSF problem as follows.
The graph $G'$ is the same as $G$.
The weights of all vertices are one, except for $r$ whose weight is $n^2$.
Then, the goal will be to find the cheapest forest that gathers a prize of at least $S=(n^2+k)^2=n^4+2n^2k+k^2$.

\begin{theorem}\label{thm:special-case}
The instance $\mathcal{I}$ of the rooted $k$-MST problem
is equivalent to
the instance $\mathcal{I}'$ of the $S$-MPCSF problem.
\end{theorem}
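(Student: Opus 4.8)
The plan is to reduce everything to one clean observation about how much prize a forest in $G'$ collects, and then to a single elementary inequality. First I would note that for any forest $F$ in $G'$, if its connected components (counting isolated vertices) are $C_1,C_2,\dots$ with total weights $W_1,W_2,\dots$, then the prize collected is exactly $\sum_i W_i^2$: the sum $\sum_{u,v\text{ connected via }F}\phi(u)\phi(v)$ breaks up as a disjoint sum over components, each contributing $\big(\sum_{v\in C_i}\phi(v)\big)^2$. Let $C_r$ be the component of $F$ containing $r$ and let $j$ be the number of non-root vertices in $C_r$; then $W(C_r)=n^2+j$, while the remaining components carry total weight $n-1-j$.

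Next I would prove the key dichotomy: $F$ collects prize at least $S=(n^2+k)^2$ if and only if $j\ge k$. The ``if'' direction is immediate, since then $W(C_r)\ge n^2+k$ already contributes $(n^2+k)^2$. For ``only if'', suppose $j\le k-1$; then $C_r$ contributes at most $(n^2+k-1)^2$, and since the remaining components have nonnegative weights summing to at most $n-1-j\le n-1$, the sum of their squares is at most $(n-1)^2$. It then suffices to check the elementary inequality $(n^2+k-1)^2+(n-1)^2<(n^2+k)^2$, equivalently $(n-1)^2<2(n^2+k)-1$, which holds because $2(n^2+k)-1\ge 2n^2-1>(n-1)^2$ for every $n\ge 1$. (Throughout we may assume $1\le k\le n-1$, the only range in which rooted $k$-MST is meaningful.)

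Finally I would deduce the equivalence of the two optima. Given an optimal $k$-MST tree $T$ for $\mathcal{I}$, viewing $T$ as a forest of $G'$, its $r$-component contains at least $k$ non-root vertices, so by the dichotomy it is feasible for $\mathcal{I}'$ with the same cost; hence $\OPT(\mathcal{I}')\le\OPT(\mathcal{I})$. Conversely, given an optimal $S$-MPCSF forest $F$ for $\mathcal{I}'$, the dichotomy forces its $r$-component $C_r$ to contain at least $k$ non-root vertices; discarding all edges outside $C_r$ and replacing the edges inside $C_r$ by a spanning tree of $C_r$ (using only edges of $F$) yields a feasible $k$-MST tree of cost at most that of $F$, so $\OPT(\mathcal{I})\le\OPT(\mathcal{I}')$. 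Therefore $\OPT(\mathcal{I})=\OPT(\mathcal{I}')$, and an optimal solution of either instance is converted to an optimal solution of the other by the pruning just described.

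The only real point is pinning down the prize-versus-$k$ dichotomy exactly at the threshold $S=(n^2+k)^2$; this is precisely why the weight of $r$ is set to $n^2$, so that $r$'s presence dominates the combined contribution of all the unit-weight vertices and the cutoff between ``$\ge k$ collected'' and ``$\le k-1$ collected'' lands cleanly at $S$. Once that inequality is verified, the rest is routine bookkeeping.
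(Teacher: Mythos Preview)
Your proof is correct and follows essentially the same approach as the paper: both establish the key dichotomy that a forest collects prize at least $S=(n^2+k)^2$ if and only if the root's component contains at least $k$ non-root vertices, by bounding the prize when $j\le k-1$ via $(n^2+j)^2+(n-1-j)^2<S$. Your algebra substitutes the extremes $j=k-1$ and remaining weight $\le n-1$ up front, while the paper keeps $k'$ and expands directly; the content is the same, and your explicit two-sided argument that $\OPT(\mathcal{I})=\OPT(\mathcal{I}')$ is a bit more carefully spelled out than the paper's.
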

\begin{proof}
As we noted in Subsection~\ref{sec:contrib},
in case of polynomially bounded integer weights,
we can make sure the returned solution collects a prize of at least $S$ (without any approximation factor).
This can be achieved by picking $\eps' < 1/S$.

Obviously, any tree connecting $k$ vertices to the root is translated
to a forest that collects a prize of at least $S$.
Let each vertex not spanned by the tree be a singleton component in the forest.

Finally, we claim that any solution of value $S$ or higher translates to a solution of value at least $k$ for the original instance.
The resulting tree is just the component of the forest containing the root vertex.
Suppose for the sake of reaching a contradiction that the component spans $k'<k$ non-root vertices.
The total prize collected is at most
\begin{align*}
(n^2+k')^2 + (n-k'-1)^2  &< n^4+2n^2k'+k'^2+n^2 \\
                         &= S + 2n^2(k'-k) +k'^2 + n^2 - k^2\\
                         &< S+2n^2(k'-k+1) \\
                         &\leq S,
\end{align*}
yielding a contradiction, and proving the supposition is false.
\end{proof}



\subsection{PCSF vs. $k$-forest}
\begin{lemma}\label{lem:kfor>pcsf}
 An $\alpha$-approximation algorithm for the \prob{$k$-forest} problem
 gives an $\alpha(1+\eps)$-approximation algorithm for the \prob{prize-collecting Steiner forest} problem, for any constant $\eps > 0$.
\end{lemma}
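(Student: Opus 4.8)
The plan is to run the given $\alpha$-approximation for \prob{$k$-forest} as a black box for every relevant value of $k$, compute the induced prize-collecting cost of each output forest (cost of bought edges plus total penalty of the disconnected pairs), and return the best. Since $k$ ranges over $\{0, 1, \dots, |\DD|\}$, there are only polynomially many invocations, so the overall running time stays polynomial. The core of the argument is to show that one of these $|\DD|+1$ solutions has PCSF cost at most $\alpha(1+\eps)$ times the optimum. The natural choice is $k^* := $ the number of pairs connected by an optimal PCSF solution $F^*$; then $F^*$ is itself a feasible $k^*$-forest, so $\OPT_{k^*\text{-forest}} \le \len(F^*)$, and the forest $F$ returned by the $k^*$-forest algorithm satisfies $\len(F) \le \alpha\,\len(F^*)$ while connecting at least $k^*$ pairs.

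\textbf{Bounding the penalty.} The subtle point is that $F$ may connect a \emph{different} set of $k^*$ pairs than $F^*$, so its penalty term need not match that of $F^*$. To handle this I would first argue that the penalties can be assumed to be (polynomially bounded) and, more importantly, I would guess (enumerate) a threshold so that we may assume the optimal solution pays penalty only on ``cheap'' pairs: concretely, with a cost increase of at most $\eps\OPT$ we can afford to also \emph{buy connectivity} for any pair whose penalty exceeds $\eps\OPT/|\DD|$ — but that changes the instance. The cleaner route is: among the guessed values of $k$, also consider the variant where we first contract/short-circuit the high-penalty pairs. I expect the argument to go through once we observe that the $k^*$ pairs $F$ happens to connect are at least as many as the $k^*$ pairs $F^*$ connects, and the \emph{total} penalty $\sum_{(u,v)\in\DD}\pi(u,v)$ is a fixed quantity $\Pi$; hence penalty paid by $F$ equals $\Pi$ minus the prize it collects, and the prize collected is what we do \emph{not} directly control. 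So the honest fix is to guess, for the optimal solution, the value $P^* := \sum_{\text{pairs connected by }F^*}\pi$, round it to the nearest multiple of $\eps\OPT/|\DD|$ (polynomially many choices after also guessing $\OPT$ within a constant factor via a crude approximation), and run a ``prize-constrained $k$-forest'' — but since the lemma only assumes an ordinary $k$-forest oracle, I would instead reduce to it by the standard trick of replacing each pair $(u,v)$ of penalty $\pi$ by $\lceil \pi / (\eps\OPT/|\DD|)\rceil$ parallel ``unit-penalty'' copies, so that connecting a set of pairs of total penalty $\approx S$ corresponds to connecting $\approx S/(\eps\OPT/|\DD|)$ unit copies; then running the $k$-forest algorithm for this derived value of $k$ yields a forest whose bought-edge cost is $\le\alpha\OPT_f$ and whose uncollected penalty is at most $(1+\eps)$ times that of $F^*$.

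\textbf{Putting it together.} Summing, the returned solution has cost at most $\alpha\,\OPT_f + (1+\eps)(\OPT - \OPT_f) \le \alpha(1+\eps)\OPT$, using $\alpha \ge 1$. Enumerating $O(|\DD|/\eps)$ values of $k$ (one for each rounded target prize, for each of $O(\log n)$ guesses of $\OPT$) keeps the running time polynomial.

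\textbf{Main obstacle.} The delicate part is precisely that the $k$-forest oracle gives no control over \emph{which} $k$ pairs get connected, hence no direct control over the penalty paid; the parallel-copies / guess-the-target-prize reduction is what converts ``at least $k$ pairs'' into ``at least prize $S$'', and one has to check that the rounding of penalties to multiples of $\eps\OPT/|\DD|$ loses only an $\eps\OPT$ additive term overall and that the number of parallel copies stays polynomial (which needs the crude constant-factor estimate of $\OPT$, available from any existing constant-factor PCSF approximation). I expect the bookkeeping of these two roundings — on $\OPT$ and on the per-pair penalties — to be the only real content; everything else is the black-box invocation and taking the minimum.
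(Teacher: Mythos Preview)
Your proposal is essentially the paper's proof: obtain a constant-factor estimate $\omega$ of $\OPT$, replace each pair $i$ by $\lfloor \pi_i/\theta\rfloor$ parallel copies with $\theta=\Theta(\eps\omega/n)$, run the $k$-forest oracle for every $k$, and take the best PCSF value. One small slip: the uncollected penalty of the chosen solution is bounded by $\OPT_\pi+\eps\,\OPT$, not by $(1+\eps)\OPT_\pi$ (consider $\OPT_\pi=0$); this still yields $\alpha\OPT_f+\OPT_\pi+\eps\,\OPT\le\alpha(1+\eps)\OPT$, so the final bound is unaffected.
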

\begin{proof}
 We show how to approximate a PCSF instance $\I$ by
 invoking several (polynomially many) instances $\I'$ of the $k$-forest problem.
 Obtain an estimate $\omega$ for  $\I$,
 such that $\frac{\omega}{3} \leq \OPT \leq \omega$ using a general-case $3$-approximation algorithm.
 Let $\pi_i$ be the penalty of the pair $i$ in $\I$.
 Without loss of generality, we can assume that $\pi_i \leq 2\omega$ for any pair $i$.
 Let $\theta = \eps\omega/3n$.
 Place $p_i=\lfloor \frac{\pi_i}{\theta} \rfloor$ copies  of  the pair $i$ in $\I'$. 
 Find an $\alpha$-approximate solution to the resulting $k$-forest instance
 for every value of $0\leq k\leq n'$, where $n'$ is the number of pairs in $\I'$.
 Compute the PCSF value for each of these solutions and report the best one.

 We show that at least one of these candidate solutions is good.
 Let $\OPT_f$ and $\OPT_\pi$ be the length of the forest and the paid penalty
 of the optimal solution, respectively.
 Suppose $\OPT$ connects a subset of terminal pairs $Q$.
 Then, $\OPT_\pi = \sum_{i\not\in Q}\pi_i$.
 Focus on the candidate solution with $k = \sum_{i\in Q}\lfloor\pi_i/\theta\rfloor$.
 The length of the corresponding $k$-forest instance is
 at most $\OPT_f$, because a possible solution is that of connecting the copies of  $Q$.
 To compute the PCSF value, we add the penalty of pairs in $Q$ that are not connected using this tree.
 We can assume either all or no copies of each pair is connected.
 The number of pairs not connected is at most $n' - k$,
 and  their penalties sum to no more than
\begin{align*}
      \sum_{i\text{ not connected}} \pi_i
&\leq  \sum_{i\text{ not connected}} (p_i+1)\theta \\
&\leq  \sum_{i\text{ not connected}} p_i\theta + n\theta \\
&\leq  (n'-k)\theta + n\theta \\
&\leq  \OPT_\pi + n\theta \\
&=     \OPT_\pi + \eps\omega/3 \\
&\leq  \OPT_\pi + \eps\OPT.
\end{align*}
Thus, the PCSF value of the best candidate solution is at most $\alpha\OPT_f+\OPT_\pi+\eps\OPT\leq\alpha(1+\eps)\OPT$.
 It remains to show the instances $\I'$ have polynomial size.
 Since $\pi \leq 2\omega$, each pair $i$ will have $p_i \leq 6n\eps^{-1}$ copies.
 Hence, $\I'$ has polynomial size and we can use the approximation algorithm for the $k$-forest.
\end{proof}

\end{document}